\documentclass[aps,prl,twocolumn,superscriptaddress,floatfix,nofootinbib,showpacs,longbibliography]{revtex4-2}

\usepackage[utf8]{inputenc}
\usepackage[T1]{fontenc}     
\usepackage[british]{babel}  
\usepackage[sc,osf]{mathpazo}
\usepackage{times}
\usepackage[table]{xcolor}
\usepackage[scaled=0.86]{berasans}  
\usepackage[colorlinks=true, citecolor=purple, urlcolor=blue]{hyperref}
\usepackage{comment}
\makeatletter
\newcommand{\setword}[2]{%
  \phantomsection
  #1\def\@currentlabel{\unexpanded{#1}}\label{#2}%
}
\makeatother
\usepackage{comment}
\usepackage{enumitem}
\usepackage{graphicx} 
\usepackage[babel]{microtype}  
\usepackage{amsmath,amssymb,amsthm,bm,amsfonts,mathrsfs,bbm} 

\usepackage{xspace}  
\usepackage{pgf,tikz}
\usepackage{xcolor}
\usepackage{multirow}
\usepackage{array}
\usepackage{bigstrut}
\usepackage{braket}
\usepackage{color}
\usepackage{natbib}
\usepackage{multirow}
\usepackage{mathtools}
\usepackage{float}
\usepackage[caption = false]{subfig}
\usepackage{xcolor,colortbl}
\usepackage{color}

\newcommand{\be}{\begin{equation}}
\newcommand{\ee}{\end{equation}}
\newcommand{\ba}{\begin{eqnarray}}
\newcommand{\ea}{\end{eqnarray}}

\newcommand{\tr}{\operatorname{Tr}}

\newtheorem{theorem}{Theorem}

\newtheorem{definition}{Definition}
\newtheorem{proposition}{Proposition}
\newtheorem{observation}{Observation}

\newtheorem{remark}{Remark}
\newtheorem{lemma}{Lemma}

\def\d{{\rm d}}

\def\>{\rangle}
\def\<{\langle}

\usepackage{centernot}
\usepackage{subfig}
\usepackage{filecontents}

\providecommand{\ket}[1]{| #1{\rangle}}
\providecommand{\bra}[1]{\langle #1|}

\providecommand{\dacb}[2]{\{\hspace{-.12cm}\{#1#2\}\hspace{-.12cm}\}}
\providecommand{\dcb}[2]{[\hspace{-.05cm}[#1#2]\hspace{-.05cm}]}

\begin{document}

\title{No Absolute Hierarchy of Quantum Complementarity}

\author{Kunika Agarwal}
\thanks{These authors contributed equally and share the credit of first author.}
\affiliation{Department of Physics of Complex Systems, S. N. Bose National Center for Basic Sciences, Block JD, Sector III, Salt Lake, Kolkata 700106, India.}

\author{Sahil Gopalkrishna Naik}
\thanks{These authors contributed equally and share the credit of first author.}
\affiliation{Department of Physics of Complex Systems, S. N. Bose National Center for Basic Sciences, Block JD, Sector III, Salt Lake, Kolkata 700106, India.}

\author{Ananya Chakraborty}
\affiliation{Department of Physics of Complex Systems, S. N. Bose National Center for Basic Sciences, Block JD, Sector III, Salt Lake, Kolkata 700106, India.}

\author{Guruprasad Kar}
\affiliation{Physics and Applied Mathematics Unit, 203 B.T. Road Indian Statistical Institute Kolkata, 700108, India.}

\author{Ram Krishna Patra}
\affiliation{HUN-REN Institute for Nuclear Research, P.O. Box 51, H-4001 Debrecen, Hungary.}

\author{Snehasish Roy Chowdhury}
\affiliation{Physics and Applied Mathematics Unit, 203 B.T. Road Indian Statistical Institute Kolkata, 700108, India.}

\author{Manik Banik}
\affiliation{Department of Physics of Complex Systems, S. N. Bose National Center for Basic Sciences, Block JD, Sector III, Salt Lake, Kolkata 700106, India.}

\begin{abstract}
Bohr’s principle of complementarity, prohibiting simultaneous access to certain physical properties within a single experimental arrangement, is considered to be a defining feature of quantum mechanics. It is commonly viewed as inducing an intrinsic hierarchy among incompatible observables: some sets of quantum properties are fundamentally more incompatible than others, as quantified by the maximal sharpness permitting their joint measurement. We show that this hierarchy ceases to be absolute in the multi-copy regime. Analyzing qubit spin observables, we prove a No-Comparison Theorem establishing that no global ordering of incompatible observable sets is preserved across all finite-copy configurations. In particular, two sets of observables can exhibit reversed complementarity ordering depending solely on whether the available resources are arranged as identical copies or as parallel–antiparallel pairs. Thus, the degree of quantum incompatibility is not an intrinsic property of observables alone but depends on the global configuration of the prepared quantum probes. Our results uncover a configuration-dependent structure of complementarity, reveal a subtle role of entanglement in shaping the structure of measurement limitations, and call for a reassessment of quantum information protocols under finite resources.  
\end{abstract}


\maketitle	
{\it Introduction.--} Quantum mechanics fundamentally departs in various ways from our day-to-day classical intuitions. This departure is most famously articulated in Bohr’s principle of complementarity, which asserts that certain physical properties cannot be unambiguously revealed within a single experimental arrangement \cite{Bohr1928,Bohr1950,Englert1995}. Its physical meaning is vividly illustrated by the trade-off between path information and interference visibility in the double-slit experiment \cite{Feynman1965,Scully1991}, and more formally by the impossibility of jointly measuring non-commuting observables such as position and momentum \cite{Arthurs1965,Weston2013} or spin components along different directions \cite{Davies1976,Lahti1980,Busch1985}. While Bohr viewed Heisenberg’s uncertainty relations \cite{Heisenberg1927} as an expression of complementarity, and Pauli offered his own interpretations \cite{Pauli1994}, a precise operational formulation of complementarity emerges within the modern theory of joint measurements \cite{Mittelstaedt1987,Busch1996,Busch1986,Heinosaari2016,Guhne2023}. This framework rigorously characterizes when observables can be assessed simultaneously and with what degree of precision. Unlike classical observables, which can in principle be measured concurrently without restriction, incompatible quantum observables resist such joint determination. Generalized measurements, described by positive-operator valued measures (POVMs) \cite{Kraus1983}, permit approximate joint measurements at the cost of reduced sharpness. The maximal sharpness compatible with a single measurement device provides a quantitative measure of complementarity and naturally induces a hierarchy among sets of observables \cite{Busch1986,Heinosaari2016,Guhne2023}: some collections of quantum properties appear intrinsically more incompatible than others. This ordering is widely regarded as a structural feature of quantum theory.

Here we demonstrate that this hierarchy is not absolute. We show that the degree of complementarity between distinct sets of observables ceases to admit a universal ordering once one enters the multi-copy regime. Specifically, we investigate joint measurability when multiple copies of a quantum system are available and when the probing states are arranged in different finite configurations. We prove a No-Comparison Theorem establishing that no state-independent ordering of incompatible observable sets is preserved across all finite-copy configurations. In particular, two sets of observables can exhibit reversed complementarity ordering depending solely on whether the available resources are arranged as identical copies or as parallel--antiparallel pairs. Thus, the incompatibility of observables is not an intrinsic property of the observables alone, but a relational feature emerging from the global configuration of finite quantum resources. The structure of complementarity, long assumed to be fixed, becomes configuration-dependent. Moreover, we identify a previously unrecognized role of entanglement in shaping measurement limits themselves, rather than merely being constrained by them. These findings reveal that quantum complementarity does not admit a universal hierarchy and establish its fundamentally relational character in the finite-resource regime.

{\it Quantifying Complementarity.--} The measurable properties of a quantum system $S$ are represented by Hermitian operators acting on the Hilbert space $\mathcal{H}_S$ associated with it. The standard procedure of measuring such an observable is described by a set of projectors $\mathrm{M}_{\text{proj}}\equiv\{ \mathrm{P}_{\psi_i} := \ket{\psi_i}\bra{\psi_i}\}_{i=1}^d$ satisfying $\sum_i \mathrm{P}_{\psi_i} = \mathbf{I}_d$; with $\mathbf{I}_d$ denoting the identity operator on $\mathcal{H}_S$ and $d=\text{dim}(\mathcal{H}_S)$. A generic quantum state is represented by density operator, positive semi-definite operators having unit trace, $\rho \in \mathcal{D}(\mathcal{H}_S)$; $\mathcal{D}(\mathcal{H}_S)$ denotes the set of all density operators acting on $\mathcal{H}_S$. The probability of outcome `$i$' when a measurement $\mathrm{M}_{\text{proj}}$ is performed on a quantum system prepared in state $\rho$ is given by the Born rule, $p(i|\rho, \mathrm{M}_{\text{proj}}) = \operatorname{tr}[\rho \mathrm{P}_{\psi_i}]$. The more general framework of POVMs is essential for our analysis. A POVM is a set of positive semidefinite operators $\mathrm{M}\equiv\{\Pi_i\}_{i=1}^n$ satisfying $\sum_i \Pi_i = \mathbf{I}_d$~\cite{Kraus1983}. POVMs encompass all physically realizable measurements and are crucial for analyzing joint measurability. We formalize our investigation in qubits, a two-level quantum system -- spin-1/2 system being the canonical example. State of a qubit allows Bloch sphere representation $\rho_{\vec{m}}=\tfrac{1}{2}(\mathbf{I}_2+\vec{m}.\sigma)$, where $\vec{m}\in\mathbb{R}^3$ with $|\vec{m}|\le1$. An unsharp qubit spin observable, a specific form of POVM, along the space direction $\hat{n}$, is given by $\sigma_{\hat{n}}(\lambda)\equiv \big\{\mathrm{P}_{a\hat{n}}(\lambda):=\lambda\mathrm{P}_{a\hat{n}}+(1-\lambda)\tfrac{1}{2}\mathbf{I}_2~|~a=\pm1\big\}$, where $\mathrm{P}_{a\hat{n}}=\ket{a\hat{n}}\bra{a\hat{n}}$ is the eigenprojector corresponding to the eigenvalue $a\in\{\pm1\}$. The sharpness values $1$ and $0$ corresponding to the sharp (projective) measurement and the completely uninformative measurement, respectively. Joint measurability of unsharp such observables is defined by \cite{Busch1996,Busch1986,Heinosaari2016,Guhne2023}:
\begin{definition}\label{def1}
A set of $N$ distinct qubit spin observables $\mathcal{O}\equiv\{\sigma_{\hat{n}_r}~|~r=1,\cdots N\}$ is jointly measurable, up to a sharpness value $\lambda^{_{[1]}}_{\mathcal{O}}$, if there exists a single POVM $\mathcal{G}^{_{[1]}}\equiv\big\{\Pi^{_{[1]}}_{\mathbf{a}}~|~\sum_{\mathbf{a}}\Pi^{_{[1]}}_{\mathbf{a}}=\mathbf{I}_2\big\}$, with $\mathbf{a}:=a_1\cdots a_N \in \{+1,-1\}^N$, s. t. $\sum_{\mathbf{a}\setminus a_r} \tr\left[ \Pi^{_{[1]}}_{\mathbf{a}}\rho_{\vec{m}}\right]= \tr\left[ \mathrm{P}_{a_r\hat{n}_r}\big(\lambda^{_{[1]}}_{\mathcal{O}}\big) \rho_{\vec{m}} \right]~\forall~r,\rho_{\vec{m}}$, implying $\sum_{\mathbf{a}\setminus a_r} \Pi^{_{[1]}}_{\mathbf{a}}= \mathrm{P}_{a_r\hat{n}_r}\big(\lambda^{_{[1]}}_{\mathcal{O}}\big)$.   
\end{definition}
\noindent Here, $\sum_{\mathbf{a} \setminus a_r}$ denotes summation over all components of $\mathbf{a}$ except the $r$-th one. This definition captures the operational meaning of measuring properties simultaneously. However, for non-commuting observables, such as spin along different axes, this simultaneous access comes at a cost. Perfect, projective measurement is forbidden, and joint measurability is only possible with a reduced precision, quantified by a sharpness parameter $\lambda \in [0,1]$. The maximum sharpness $\lambda_{\mathcal{O}}$ for which a given set of observables $\mathcal{O}$ are jointly measurable provides a natural and operational measure of their complementarity: a lower maximum sharpness indicates a stronger incompatibility within the observables\cite{Wolf2009,Banik2013,Heinosaari2015,Banik2015}. This naturally induces a hierarchy among sets of observables. We can say a set of observables $\mathcal{O}$ exhibits a higher degree of complementarity than another set $\mathcal{O}'$, denoted $\mathcal{O} \succ \mathcal{O}'$, if the maximum sharpness parameter ensuring joint measurement of the observables in former set is less than that of the later, namely $\lambda_{\mathcal{O}} < \lambda_{\mathcal{O}'}$. For instance, two spin observables along mutually orthogonal space directions are jointly measurable up to the sharpness threshold $1/\sqrt{2}$, whereas for three mutually orthogonal spin observables the threshold further reduces to $1/\sqrt{3}$ \cite{Busch1986}. This establishes a seemingly fundamental ordering, suggesting that some sets of properties are intrinsically more incompatible than others. 
\begin{figure}[t!]
\centering
\includegraphics[width=1\linewidth]{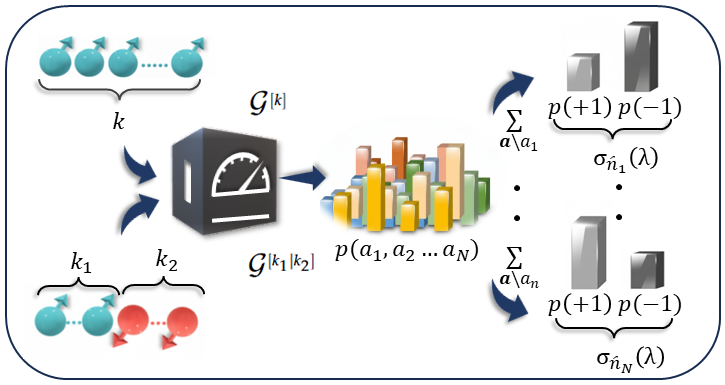}
\caption{Multi-copy joint measurability: A device $\mathcal{G}^{[k]}$ accesses $k$-copy of identical qubit states in each experimental run, and yields the outcome statistics $\{p({\bf a})\}_{{\bf a}}$, with $\mathbf{a}:=a_1\cdots a_N \in \{+1,-1\}^N$. The device jointly measures a set $\mathcal{O}$ of $N$ distinct observables up to an optimal sharpness value $\lambda~(=\lambda^{[k]}_{\mathcal{O}})$ if the outcome statistics $\{p({\bf a})\}_{\bf a}$ can be processed to reproduce measurement statistics of each observable in $\mathcal{O}$ with that effective sharpness. The device $\mathcal{G}^{[k_1|k_2]}$, on the other hand, accesses $k_1$ copies of a qubit state along with $k_2$ copies of its spin-flipped version and ensures joint measurability up to a sharpness value $\lambda~(=\lambda^{[k_1|k_2]}_{\mathcal{O}})$.}
\label{figJM}
\end{figure}

{\it Multi-copy Joint Measurability.--} The paradigm shifts when we enter the multi-copy regime, where a measurement device probes $k$ copies of a quantum state in each experimental run (see Figure \ref{figJM}). In this setup, a set of observables is $k$-copy jointly measurable if there exists a single, potentially entangled, measurement on the joint Hilbert space $\mathcal{H}^{\otimes k}$ whose outcome statistics can be post-processed to reproduce the statistics of each observable with an effective sharpness $\lambda^{[k]}$ \cite{Carmeli2016}.
\begin{definition}\label{def2}
The set $\mathcal{O}$ is jointly measurable on the configuration $[k]$, up to a sharpness value $\lambda^{_{[k]}}_{\mathcal{O}}$, if there exists a single POVM $\mathcal{G}^{_{[k]}}\equiv\big\{\Pi^{_{[k]}}_{\mathbf{a}}~|~\sum_{\mathbf{a}}\Pi^{_{[k]}}_{\mathbf{a}}=\mathbf{I}^{\otimes k}_2\big\}$, s.t. $\sum_{\mathbf{a}\setminus a_r} \tr\left[ \Pi^{_{[k]}}_{\mathbf{a}} \rho_{\vec{m}}^{\otimes k}\right] = \tr\left[ \mathrm{P}_{a_r\hat{n}_r}\big(\lambda^{_{[k]}}_{\mathcal{O}}\big) \rho_{\vec{m}} \right]~\forall~r,~\rho_{\vec{m}}$.  
\end{definition}
\noindent Notably, unlike the single-copy case, the joint measurability conditions in multi-copy case do not boils down to a set of operator equalities. One can also consider an asymmetric configurations `$[k_1|k_2]$' wherein the measurement device is fed with $k_1$ copies of a qubit state and $k_2$ copies of their spin-flipped counterpart.
\begin{definition}\label{def3}
The set $\mathcal{O}$ is jointly measurable on the configuration $[k_1|k_2]$, up to a sharpness value $\lambda^{_{[k_1|k_2]}}_{\mathcal{O}}$, if there exists a single POVM $\mathcal{G}^{_{[k_1|k_2]}}\equiv\big\{\Pi^{_{[k_1|k_2]}}_{\mathbf{a}}~|~\sum_{\mathbf{a}}\Pi^{_{[k_1|k_2]}}_{\mathbf{a}}=\mathbf{I}^{\otimes (k_1+k_2)}_2\big\}$, s.t. $\sum_{\mathbf{a} \setminus a_r} \tr\left[ \Pi^{_{[k_1|k_2]}}_{\mathbf{a}}\hspace{-.1cm} \left( \rho_{\vec{m}}^{\otimes k_1} \otimes \rho_{-\vec{m}}^{\otimes k_2} \right) \right]= \tr\left[\mathrm{P}_{a_r\hat{n}_r}\big(\lambda^{_{[k_1|k_2]}}_{\mathcal{O}}\big) \, \rho_{\vec{m}} \right]~\forall~r,~\rho_{\vec{m}}$.    
\end{definition}
\noindent The simplest such configuration, the $[1|1]$ antiparallel pair, is known to encode more quantum information than the parallel configuration of two identical copies \cite{Gisin1999}. Recent research now reveals a further advantage: the antiparallel pair can also achieve higher joint measurability threshold for certain sets of complementary observables as compared to the parallel arrangement \cite{Patra2026}.

{\it Configuration-Dependent Complementarity (CDC).--} We investigate two configurations: the symmetric configuration $[k]$, and the asymmetric configuration $[k_1|k_2]$. While prior research establishes advantage of  asymmetric configuration over the symmetric one for certain observables, we now demonstrate a even more profound phenomenon: the very hierarchy of complementarity between different sets of observables undergoes a dramatic reversal when comparing these configurations. To concretely demonstrate this reversal, we first analyze two geometrically symmetric sets of spin observables: 
\begin{itemize}[itemsep=-.25cm, parsep=.35cm, topsep=2pt, leftmargin=.3cm]
\item $\mathtt{SyTri}$:- Contains three spin observables with measurement axes arranged symmetrically in the equatorial plane forming an equilateral triangle;
\item $\mathtt{SyTet}$:- Contains four spin observables with measurement axes oriented along the vertices of a regular tetrahedron.
\end{itemize}
\noindent We also consider two configurations: three identical copies of a qubit state -- the configuration $[3]$, and the anti-parallel configuration -- the configuration $[1|1]$. With this we can now state our No-Comparison theorem. 
\begin{figure}[t!]
\centering
\includegraphics[width=1\linewidth]{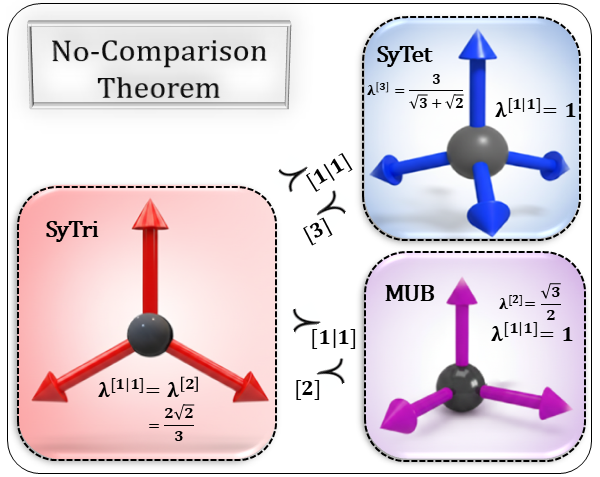}
\caption{Configuration-Dependent Complementarity: The triangular set $\mathtt{SyTri}$ is perfectly joint measurable on three identical qubit, while the tetrahedral set $\mathtt{SyTet}$ is not. Conversely, on anti-parallel configuration $\mathtt{SyTet}$ is perfectly jointly measurable but $\mathtt{SyTri}$ is not. For the $\mathtt{MUB}$ observables $\lambda^{_{[2]}}_{\mathtt{MUB}}=\sqrt{3}/2$ \cite{Carmeli2016}, whereas $\lambda^{_{[1|1]}}_{\mathtt{MUB}}=1$ \cite{Patra2026}. On the other hand, for the observable set $\mathtt{SyTri}$ we have  $\lambda^{_{[2]}}_{\mathtt{SyTri}}=\lambda^{_{[1|1]}}_{\mathtt{SyTri}}=2\sqrt{2}/3$, thereby supporting the thesis of No-Comparison theorem.}
\label{figNo-C}
\end{figure}

\begin{theorem}\label{theo1}
[No-Comparison] The spin observables $\mathtt{SyTri}$ are perfectly jointly measurable on three identical copies of a qubit state but not in the anti-parallel configuration, whereas $\mathtt{SyTet}$ observables show the opposite behavior. 
\end{theorem}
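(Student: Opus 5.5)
The plan is to split the theorem into its four constituent claims: two constructive ones (SyTri on $[3]$, SyTet on $[1|1]$) and two no-go ones (SyTri on $[1|1]$, SyTet on $[3]$). Each is handled through the outcome statistics $p(\mathbf a|\vec m)$ of a candidate device. Throughout I use one observation. Requiring $\lambda=1$ forces every marginal to equal the sharp probability $\tfrac12(1+a_r\hat n_r\cdot\vec m)$. So for a pure probe with $\vec m=-a_r\hat n_r$ the string $\mathbf a$ must have zero probability. Since $p(\mathbf a|\vec m)\ge 0$ on the whole Bloch ball, each such zero on the sphere is a minimum. Hence the tangential gradient of $p(\mathbf a|\cdot)$ also vanishes there, i.e.\ the zero is double.

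\textbf{Constructive parts.} For SyTri on $[3]$ the construction is immediate. Measure the sharp observable $\sigma_{\hat n_r}$ on the $r$-th copy, $r=1,2,3$, and record $a_r$. This product POVM reproduces each marginal exactly with $\lambda=1$ for every $\rho_{\vec m}$, so $\lambda^{[3]}_{\mathtt{SyTri}}=1$.

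For SyTet on $[1|1]$ I would start from the decomposition
\[
\rho_{\vec m}\otimes\rho_{-\vec m}=\tfrac14\Big(\mathbf I\otimes\mathbf I+\vec m\cdot(\sigma\otimes\mathbf I-\mathbf I\otimes\sigma)-\textstyle\sum_{ij}m_im_j\,\sigma_i\otimes\sigma_j\Big).
\]
The linear term is exchange-antisymmetric, so it connects the singlet to the triplet. This is exactly the structure exploited by Gisin and Popescu~\cite{Gisin1999}. I would look for a two-qubit orthonormal basis $\{\ket{\Phi_j}\}_{j=1}^4$, each a superposition of the singlet and a triplet state labelled by the tetrahedral direction $\hat n_j$. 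Its outcome probabilities should have linear part proportional to $\hat n_j\cdot\vec m$, with quadratic parts that keep them nonnegative. Outcome $j$ is then post-processed to a string $\mathbf a^{(j)}$ (adding classical randomness if needed) so that each marginal becomes exactly $\tfrac12(1+a_r\hat n_r\cdot\vec m)$. Fixing the phases and amplitudes so that the quadratic terms cancel in every marginal is a finite computation, guided by $\sum_j\hat n_j=0$ and tetrahedral symmetry.

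\textbf{No-go parts.} For SyTet on $[3]$, every $p(\mathbf a|\vec m)=\tr[\Pi_{\mathbf a}\rho_{\vec m}^{\otimes3}]$ is a polynomial of degree $\le3$ on the sphere, spanned by harmonics with $\ell\le3$ (16 dimensions). For each string $\mathbf a$, the observation above requires a double zero at the four points $-a_r\hat n_r$, giving $4\times3=12$ linear conditions. I would show that the residual space, intersected with nonnegativity on $S^2$, cannot be summed over $\mathbf a$ to give the constant $1$ together with the prescribed linear marginals. The tetrahedral symmetry of the sets $\{-a_r\hat n_r\}$ should allow an explicit computation.

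For SyTri on $[1|1]$ the polynomials have degree $\le 2$. In addition, the swap symmetry $\rho_{\vec m}\otimes\rho_{-\vec m}\mapsto\rho_{-\vec m}\otimes\rho_{\vec m}$ constrains the $\ell=1$ part to come from the antisymmetric term displayed above. I would impose the double zeros at the three coplanar points $-a_r\hat n_r$ for each $\mathbf a$. A nonnegative quadratic vanishing doubly at three points of a great circle must vanish on that whole circle, and this should force $p(\mathbf a|\cdot)$ to have no linear component. That contradicts the nonzero linear marginals.

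I expect the main obstacle to be the no-go parts, especially SyTet on $[3]$. The zero-counting gives only necessary conditions, and nonnegative cubics on the sphere are not certified by linear algebra alone. If the counting is not decisive, my fallback is to symmetrize any feasible POVM over the tetrahedral group and the permutations of the copies, reducing it to a few parameters. I would then exhibit a dual (witness) operator showing that $\lambda=1$ is infeasible, i.e.\ a small semidefinite-programming certificate.
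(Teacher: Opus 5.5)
\emph{$\mathtt{SyTet}$ on $[1|1]$: your four-outcome ansatz cannot work.} Perfect joint measurability forces every effect $\Pi_{\mathbf a}$ of the post-processed joint POVM to annihilate the four product vectors $\ket{-a_r\hat n_r}\ket{a_r\hat n_r}$. In the orthonormal basis $\{\ket{\psi^-},(\sigma_i\otimes\mathbf I_2)\ket{\psi^-}\}$, the vector $\ket{\hat v}\ket{-\hat v}$ is, up to a phase, proportional to $(1,\hat v)$. Hence $\Pi_{\mathbf a}=0$ unless the four points $-a_r\hat n_r$ are coplanar, i.e.\ form a face of the cube $\{\pm\hat n_r\}$. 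Only six strings survive. Each admits only the one-dimensional support spanned by $\tfrac12\ket{\psi^-}\pm\tfrac{\sqrt3}{2}(\sigma_\alpha\otimes\mathbf I_2)\ket{\psi^-}$, $\alpha\in\{x,y,z\}$. These six vectors are pairwise non-orthogonal (overlaps $\tfrac14$ or $-\tfrac12$). So no orthonormal basis works, whatever its phases and whatever classical post-processing you add. Up to splitting outcomes, the measurement is forced to be the paper's six-outcome POVM $\tfrac23\ket{\xi_{\pm\hat\alpha}}\bra{\xi_{\pm\hat\alpha}}$, whose triplet components point along the face normals $\pm\hat x,\pm\hat y,\pm\hat z$, not the tetrahedral axes.

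You can see the failure concretely for the Gisin--Popescu choice $(\hat n_j\cdot\sigma\otimes\mathbf I_2)\ket{\psi^-}$. Orthonormality fixes the amplitudes to $\tfrac12,\tfrac{\sqrt3}{2}$ and gives $p_j=\tfrac18\big(1+2\sqrt3\cos\theta\,\hat n_j\cdot\vec m+3(\hat n_j\cdot\vec m)^2\big)$ on pure states. The four matrices $\hat n_j\hat n_j^{\mathrm T}$ are linearly independent. So the quadratic term of $\sum_j q_r(j)p_j$ is constant on the sphere only if all $q_r(j)$ are equal, and then the linear term, $\propto\sum_j\hat n_j$, vanishes.

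\emph{$\mathtt{SyTet}$ on $[3]$: function-level zero counting cannot succeed, even in principle.} Put $f_{\pm\hat\alpha}(\vec m)=\tfrac1{12}(1\pm\sqrt3\,m_\alpha)^2$ on the six face strings and $0$ elsewhere. These functions satisfy every constraint you impose:
\begin{itemize}
\item they are nonnegative and lie in the $\ell\le2$ part of your $16$-dimensional space;
\item they vanish doubly at every required point;
\item they sum to one on the sphere and reproduce the sharp tetrahedral marginals exactly.
\end{itemize}
They are simply the statistics of the $[1|1]$ POVM above. So nonnegativity, degree and zeros of $p(\mathbf a|\cdot)$ cannot separate $[3]$ from $[1|1]$; you must use positivity of $\Pi_{\mathbf a}$ as an operator. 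This is the paper's idea (Proposition~\ref{prop1}). From $\Pi_{\mathbf a}\ge0$ and $\tr[\Pi_{\mathbf a}\mathrm P_{-a_r\hat n_r}^{\otimes3}]=0$ one gets the vector identity $\Pi_{\mathbf a}\ket{\psi_{-a_r\hat n_r}}^{\otimes3}=0$. The third tensor powers of four distinct qubit states span $\mathrm{Sym}^3(\mathbb C^2)$ (Vandermonde, Lemma~\ref{lemma1}). Hence every $\Pi_{\mathbf a}$ annihilates the symmetric subspace, contradicting $\sum_{\mathbf a}\Pi_{\mathbf a}=\mathbf I_2^{\otimes3}$. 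Your SDP fallback would have to encode exactly this, and as stated it is only a promise.

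The other two parts are fine. The product measurement for $\mathtt{SyTri}$ on $[3]$ is the paper's. Your great-circle argument for $\mathtt{SyTri}$ on $[1|1]$ is correct. It is a genuinely different and more elementary route than the paper's, which conjugates the second qubit by a $\pi$-rotation about the plane normal to turn a $[1|1]$ device into a $[2]$ device (Proposition~\ref{prop2}) and then invokes Proposition~\ref{prop1}. You can also finish it faster: once every $p(\mathbf a|\cdot)$ vanishes on the whole great circle, $\sum_{\mathbf a}p(\mathbf a|\vec m)=0\neq1$ there.
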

\noindent The proof employs two impossibility arguments as formalized in the following two Propositions with their proof detailed in Appendix. 
\begin{proposition}\label{prop1}
No set $\mathcal{O}$ of $N$ distinct spin observables is perfectly jointly measurable on $[N-1]$ configuration, i.e., the optimal sharpness value $\lambda^{_{[N-1]}}_{^{\mathcal{O}}}< 1$.    
\end{proposition}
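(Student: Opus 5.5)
The plan is to turn the problem into linear algebra on the symmetric subspace and use the sharp marginals to force every effective POVM element to vanish. Suppose, for contradiction, that $\mathcal{O}=\{\sigma_{\hat n_r}\}_{r=1}^N$ is jointly measurable on $[k]$ with $k=N-1$ and $\lambda^{[k]}_{\mathcal{O}}=1$, via a POVM $\{\Pi^{[k]}_{\mathbf a}\}$. The input states $\rho_{\vec m}^{\otimes k}$ live on the symmetric subspace $\mathrm{Sym}^k(\mathbb{C}^2)$, which has dimension $k+1=N$. I will therefore work with the compressions $\tilde\Pi_{\mathbf a}:=P_{\rm sym}\Pi^{[k]}_{\mathbf a}P_{\rm sym}$. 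These are positive semidefinite, sum to the identity on $\mathrm{Sym}^k$, and reproduce all the statistics in Definition~\ref{def2}. Since the marginal conditions hold for every $\rho_{\vec m}$, I will use them only for pure states $\rho_{\vec m}=\ket{\vec m}\bra{\vec m}$ with $|\vec m|=1$.

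\textbf{Step 1 (zeros of each element).} Fix an outcome string $\mathbf a$ and an index $r$, and take $\vec m=-a_r\hat n_r$. The sharp marginal gives $\sum_{\mathbf a\setminus a_r}\tr[\Pi^{[k]}_{\mathbf a}\rho_{\vec m}^{\otimes k}]=\tr[\mathrm P_{a_r\hat n_r}\mathrm P_{-a_r\hat n_r}]=0$. Each term is nonnegative, so $\bra{-a_r\hat n_r}^{\otimes k}\tilde\Pi_{\mathbf a}\ket{-a_r\hat n_r}^{\otimes k}=0$. By positivity of $\tilde\Pi_{\mathbf a}$ this upgrades to $\tilde\Pi_{\mathbf a}\ket{-a_r\hat n_r}^{\otimes k}=0$. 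Doing this for all $r$, every $\tilde\Pi_{\mathbf a}$ annihilates the $N$ product vectors $\ket{v_r}^{\otimes k}$, where $\ket{v_r}:=\ket{-a_r\hat n_r}$.

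\textbf{Step 2 (these vectors span $\mathrm{Sym}^k$).} Because the observables are distinct, the axes satisfy $\hat n_r\neq\pm\hat n_s$ for $r\neq s$. Hence the qubit vectors $\ket{v_r}$ are pairwise non-proportional, whatever the signs $a_r$ are.

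1. Identify $\mathrm{Sym}^k(\mathbb{C}^2)$ with the space of homogeneous polynomials of degree $k$ in two variables.
2. Under this identification, $\ket{v}^{\otimes k}$ with $\ket v=\alpha\ket0+\beta\ket1$ maps to $(\alpha x+\beta y)^k$.
3. Pairwise non-proportional linear forms give $k$-th powers that are linearly independent whenever their number is at most $k+1$. After dehomogenising, this is a Vandermonde determinant in the distinct ratios $\beta_r/\alpha_r$, with $\infty$ handled by the obvious modification.

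With $N=k+1$ we conclude that $\{\ket{v_r}^{\otimes k}\}_{r=1}^N$ is a basis of $\mathrm{Sym}^k$.

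\textbf{Step 3 (contradiction).} By Steps 1 and 2, $\tilde\Pi_{\mathbf a}=0$ on $\mathrm{Sym}^k$ for every $\mathbf a$. This contradicts $\sum_{\mathbf a}\tilde\Pi_{\mathbf a}=\mathbf I_{\rm sym}$; for instance, the total outcome probability on any $\rho_{\vec m}^{\otimes k}$ would be $0$ instead of $1$. Hence $\lambda^{[N-1]}_{\mathcal{O}}<1$.

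Strictness of the inequality needs a brief remark. The set of achievable sharpness values is closed, because the set of POVMs with finitely many outcomes on a finite-dimensional space is compact and the marginal constraints are continuous. So the supremum is attained, and it cannot equal $1$. The same argument clearly works for any $k\le N-1$.

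The main obstacle I expect is Step 2. I need to make sure the sign choices $a_r$ never create coinciding or antipodal directions that spoil independence; antipodal directions are harmless anyway, since antipodal qubit states are orthogonal and not proportional. I also need to state the Vandermonde independence of coherent-state powers cleanly.
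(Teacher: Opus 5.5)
Your proposal is correct and follows essentially the same route as the paper's proof. Sharp marginals force every effect to annihilate the product vectors $\ket{-a_r\hat n_r}^{\otimes (N-1)}$. A Vandermonde argument (the paper's Lemma~\ref{lemma1}) shows these $N$ vectors span $\mathrm{Sym}^{N-1}(\mathbb{C}^2)$, which contradicts completeness of the POVM on the symmetric subspace.

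Your compression to the symmetric subspace is cosmetic. Two points you make explicit are left implicit in the paper: distinctness must mean $\hat n_r\neq\pm\hat n_s$, and a compactness argument shows the optimal sharpness is attained, so ruling out $\lambda=1$ really gives $\lambda^{[N-1]}_{\mathcal O}<1$.
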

\noindent While Proposition~\ref{prop1} establishes a strict limitation for perfect joint measurability $N$ distinct spin observables on $(N-1)$ identical copies of qubit, here we highlight that access to multiple copies in a single experimental run offers a quantifiable advantage for joint measurement, as measured by an increased sharpness parameter \cite{Carmeli2016}. The authors in \cite{Carmeli2016} introduced `index of incompatibility' defined as the minimal number of copies that is needed in order to make a given set of observables compatible. They also posed the question whether there exist sets of $n$ observables whose index of incompatibility is $n$. Our Proposition~\ref{prop1} answers this question in full generality by showing that, for any set of $n$ distinct spin observables, the index of incompatibility is indeed $n$. Notably, a result analogous to Proposition~\ref{prop1} does not hold in general for configuration $[k_1|k_2]$ even when $k_1+k_2<N$. For instance, the $\mathtt{MUB}$ observables can be perfectly joint measurable on anti-parallel configuration $[1|1]$ \cite{Patra2026}. Nonetheless, in the next we establish a generic result when the observables satisfy a particular geometric constraint.
\begin{proposition}\label{prop2}
For any set of $N$ distinct spin observables $\mathcal{O}$ with the measurement directions in a fixed equatorial plane of the Bloch sphere, the optimal joint measurability sharpness on two-copy state  is configuration-independent, i.e., $\lambda_{\mathcal{O}}^{_{[2]}} = \lambda_{\mathcal{O}}^{_{[1|1]}}$.
\end{proposition}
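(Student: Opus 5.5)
Our plan is to construct an explicit unitary on $\mathcal{H}_2^{\otimes 2}$ that carries the parallel pair into the antiparallel pair for every state whose Bloch vector lies in the chosen plane, and then to extend this map to general states. We fix the equatorial plane to be the $x$--$y$ plane, so every measurement direction satisfies $\hat{n}_r=(\cos\phi_r,\sin\phi_r,0)$. The key observation concerns the operator $\sigma_z$: it flips the $x$ and $y$ Bloch components and keeps the $z$ component, $\sigma_z\rho_{(m_x,m_y,m_z)}\sigma_z=\rho_{(-m_x,-m_y,m_z)}$. For in-plane Bloch vectors ($m_z=0$) this gives exactly $\rho_{-\vec m}$. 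We will therefore use $U=\mathbf{I}_2\otimes\sigma_z$, which maps $\rho_{\vec m}^{\otimes 2}$ to $\rho_{\vec m}\otimes\rho_{-\vec m}$ whenever $\vec m$ lies in the plane.

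The difficulty is that Definitions \ref{def2} and \ref{def3} quantify over all states $\rho_{\vec m}$, not only in-plane ones. To handle this, we first show that for in-plane observables the target statistics $\tr[\mathrm{P}_{a\hat n_r}(\lambda)\rho_{\vec m}]=\tfrac12(1+a\lambda\,\hat n_r\cdot\vec m)$ do not depend on $m_z$. We then symmetrize.

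1. \textbf{Symmetrizing a $[2]$ device.} Start from an optimal $[2]$ device $\{\Pi_{\mathbf a}\}$. Replace it by a device that, with probability $1/2$ each, either applies $\Pi_{\mathbf a}$ directly or first applies the reflection $\vec m\mapsto(m_x,m_y,-m_z)$ to both inputs and then measures. This reflection is antiunitary (complex conjugation in the $\sigma_z$ basis, up to a basis choice), so we implement it at the level of POVMs by the transpose: $\Pi_{\mathbf a}\mapsto\Pi_{\mathbf a}^{T}$, taking the transpose in the computational basis. The transpose is still a valid POVM. It reproduces the marginals at the reflected state, and those equal the original marginals because they do not depend on $m_z$. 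This step requires the $\hat n_r$ to be real-combination directions, i.e. $\sigma_x$ and $\sigma_y$ with $\sigma_y^{T}=-\sigma_y$. We need to check the sign carefully: the transpose sends $\vec m$ to $(m_x,-m_y,m_z)$, so we should instead combine the transpose with conjugation by $\sigma_x^{\otimes 2}$, which gives the reflection in $z$.

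2. \textbf{Reducing to the in-plane marginal functional.} After symmetrization, the marginal of each outcome is a polynomial of degree at most 2 in $\vec m$ that is even in $m_z$. For the marginals, only in-plane data ultimately matter, but the map $U$ sends $\rho_{\vec m}^{\otimes2}$ to $\rho_{\vec m}\otimes\rho_{(-m_x,-m_y,m_z)}$ rather than to $\rho_{\vec m}\otimes\rho_{-\vec m}$. We therefore compare the two expressions, $\tr[\Pi\,\rho_{\vec m}\otimes\rho_{(-m_x,-m_y,m_z)}]$ and $\tr[\Pi\,\rho_{\vec m}\otimes\rho_{-\vec m}]$, for the marginal operators $\sum_{\mathbf a\setminus a_r}\Pi_{\mathbf a}$. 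Their difference consists of terms linear in $m_z$ on the second factor. We aim to show that these terms vanish after the symmetrization of step 1, which may require a reflection acting on a single tensor factor rather than on both.

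3. \textbf{Equality of thresholds.} Once the correspondence holds in the direction $[2]\to[1|1]$, the same $U$ (it is self-inverse) gives the reverse direction $[1|1]\to[2]$. This yields $\lambda^{[2]}_{\mathcal O}=\lambda^{[1|1]}_{\mathcal O}$.

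\textbf{Main obstacle.} The principal obstacle is step 2, namely controlling the $m_z$-dependence cleanly. The fallback is more direct: parametrize the marginal operators by the correlation tensor $\Pi=\sum c_{ij}\sigma_i\otimes\sigma_j$, write the marginal constraints as polynomial identities in $\vec m$, and note that $U$ acts as $c_{ij}\mapsto\pm c_{ij}$, flipping the sign for $j\in\{x,y\}$. We then observe that the constraint sets coincide after the $m_z$-even averaging, and that positivity is preserved because $U$ is unitary.
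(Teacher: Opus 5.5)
Your plan works and takes a genuinely different route from the paper's, but your step 2 is left unfinished.

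\textbf{Comparison with the paper.} Your unitary is the paper's: for the $xz$-plane the paper conjugates the second factor by $U_{\hat y}$, the $\pi$-rotation about the plane normal, just as you use $\mathbf{I}_2\otimes\sigma_z$ for the $xy$-plane. The two proofs differ in how they handle the out-of-plane dependence. The paper avoids it entirely. Via Result~1 and Result~2 of the appendix it takes the marginals to be exactly $\tfrac12\big(\mathrm{P}_{a_r\hat n_r}(\lambda)\otimes\mathbf{I}_2+\mathbf{I}_2\otimes\mathrm{P}_{\pm a_r\hat n_r}(\lambda)\big)$. Conjugation then maps one normal form onto the other, in either direction, by a one-line operator identity. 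You correctly see that without such a normal form a bare unitary can fail. A priori, a term $\epsilon(\sigma_z\otimes\mathbf{I}_2-\mathbf{I}_2\otimes\sigma_z)$ in a $[2]$ marginal is invariant under $U$ and invisible on $\rho_{\vec m}^{\otimes2}$, yet contributes $2\epsilon m_z$ on $\rho_{\vec m}\otimes\rho_{-\vec m}$. Your remedy, averaging over the reflection through the plane, is a legitimate, self-contained substitute for citing Results~1 and~2. For $[2]$, averaging over the swap instead would put you directly in Result~1's normal form.

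\textbf{Closing step 2.} The property you record after step 1, evenness in $m_z$ of $\vec m\mapsto\tr[\Pi\rho_{\vec m}^{\otimes2}]$, is too weak: it only yields $c_{0z}+c_{z0}=0$. You need to argue at the operator level.
\begin{itemize}
\item Write $\sigma_0:=\mathbf{I}_2$ and a marginal as $M=\sum_{i,j}c_{ij}\sigma_i\otimes\sigma_j$. Consider the map $X\mapsto(\sigma_x\otimes\sigma_x)X^{\mathrm T}(\sigma_x\otimes\sigma_x)$. It is positive, unital and self-dual.
\item It sends $\rho_{\vec m}\otimes\rho_{\vec u}$ to $\rho_{R\vec m}\otimes\rho_{R\vec u}$ with $R=\mathrm{diag}(1,1,-1)$. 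So it preserves both input families, and it preserves the targets because $R\hat n_r=\hat n_r$.
\item It multiplies $c_{ij}$ by $\eta_i\eta_j$, where $\eta_z=-1$ and $\eta_0=\eta_x=\eta_y=1$. Averaging therefore kills every $c_{ij}$ with exactly one index equal to $z$.
\item The discrepancy you isolate equals, up to sign, $2m_z\big(c_{0z}+c_{xz}m_x+c_{yz}m_y+c_{zz}m_z\big)$. After averaging only $c_{zz}$ survives.
\item $c_{zz}$ vanishes because the constraint holds on the whole Bloch ball, which forces the symmetric part of the block $(c_{ij})_{i,j\in\{x,y,z\}}$ to vanish.
\end{itemize}
The same averaging must be applied before the reverse direction; $U$ being self-inverse is not enough. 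The operator $\sigma_z\otimes\mathbf{I}_2+\mathbf{I}_2\otimes\sigma_z$ is $U$-invariant and invisible on $\rho_{\vec m}\otimes\rho_{-\vec m}$, yet contributes $2m_z$ on $\rho_{\vec m}^{\otimes2}$. Drop the reflection on a single tensor factor that you float: it is not a symmetry of either input family.

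With these points your proof is complete. It buys independence from the cited characterisations at the cost of coefficient bookkeeping and a second use of planarity. The paper's proof is shorter but rests entirely on Results~1 and~2.
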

\noindent Proposition~\ref{prop1} and Proposition ~\ref{prop2} ensure $\lambda^{_{[3]}}_{^{\mathrm{SyTet}}}<1$ as well as $\lambda^{_{[1|1]}}_{^{\mathrm{SyTri}}}=\lambda^{_{[2]}}_{^{\mathrm{SyTri}}}<1$. On the other hand, we have $\lambda^{_{[3]}}_{^{\mathrm{SyTri}}}=1$ as the set $\mathrm{SyTri}$ contains three spin observables. We then design a specific entangled POVM acting on $2$-qubit Hilbert space that when applied to the antiparallel configuration perfectly recovers measurement statistics of all the four tetrahedral observables simultaneously, thereby ensuring $\lambda^{[1|1]}_{\mathrm{SyTet}}=1$ (see Appendix), which thus proves the claim of our No-comparison theorem. The stark reversal established in No-comparison theorem thus  definitively demonstrates that the question ``Which set is more complementary?" admits no configuration-independent answer -- the hierarchy is fundamentally relational rather than absolute. Notably, the generality of Propositions~\ref{prop1} and ~\ref{prop2} allow us to replace the observables set $\mathtt{SyTri}$ in the No-comparison theorem by any set of three distinct equatorial spin observables.

Remarkably, this incomparability phenomenon extends beyond the regime of perfect joint measurability. Even when exact simultaneous measurement is impossible, the sharpness hierarchy exhibits the same configuration-dependent reversal. For instance, consider mutually unbiased bases ($\mathtt{MUB}$) set consisting of three spin observables along mutually orthogonal space directions. As we argue, on the anti-parallel configuration $[1|1]$, $\mathtt{MUB}$ observables are less complementary than triangular ones, i.e., $\lambda_{\mathtt{MUB}}^{[1|1]} > \lambda_{\mathtt{SyTri}}^{[1|1]}$, while for two identical copies $[2]$, this relationship inverts, with $\mathtt{MUB}$ becoming more complementary than the triangular set, namely $\lambda_{\mathtt{MUB}}^{[2]} < \lambda_{\mathrm{SyTri}}^{[2]}$. This secondary reversal, illustrated in Figure \ref{figNo-C}, reinforces our central thesis: no universal ordering of complementarity exists independent of the quantum resource configuration.
\begin{figure}[t!]
\centering
\includegraphics[width=.7\linewidth]{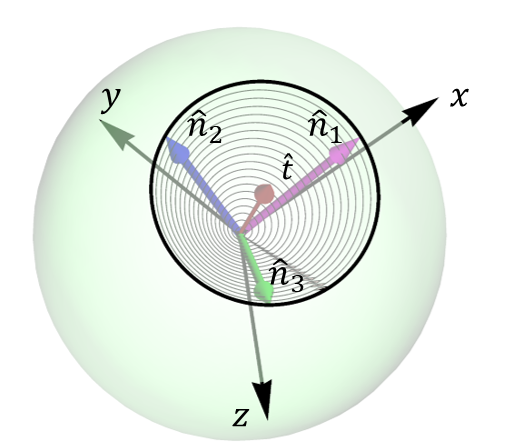}
\caption{Three symmetrically arranged spin observables: Symmetric configuration of three spin observables $\mathcal{O}_\theta \equiv \{\sigma_{\hat{n}_i}\}_{i=1}^3$. Each observable's direction $\hat{n}_i$ makes an angle $\theta \in (0, \pi)$ with the reference axis $\hat{t} = (1,1,1)^{\text{T}}/\sqrt{3}$. The angle between any two distinct observables is $\varphi = \cos^{-1}\left((3\cos^2\theta - 1)/2\right)$.}
\label{figs1}
\end{figure}
\begin{figure}[b!]
\centering
\includegraphics[width=1\linewidth]{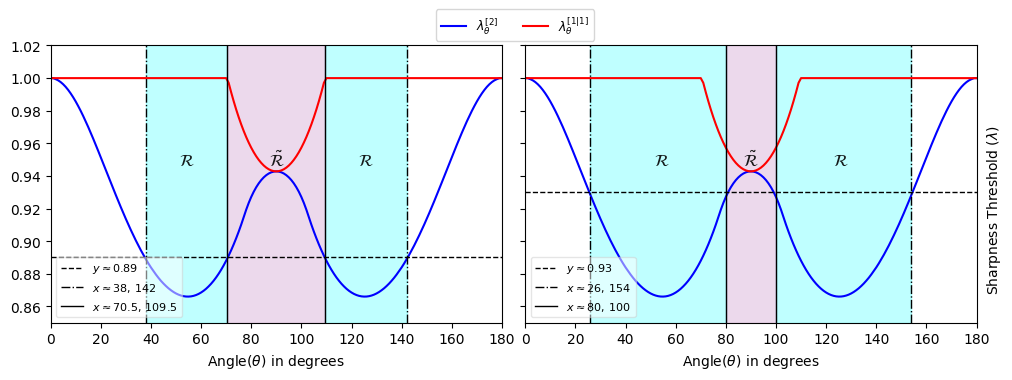}
\caption{For any triple of spin observables $\mathcal{O}_{\tilde{\theta}}$ with $\tilde{\theta}\in \tilde{\mathcal{R}}$ and any triple of spin observables $\mathcal{O}_\theta$ with $\theta \in \mathcal{R}$, we observe that $\lambda^{_{[2]}}_{\mathcal{O}_{\tilde{\theta}}} > \lambda^{_{[2]}}_{\mathcal{O}_\theta}$ while $\lambda^{_{[1|1]}}_{\mathcal{O}_{\tilde{\theta}}} < \lambda^{_{[1|1]}}_{\mathcal{O}_\theta}$, thereby re-establishing the core thesis of the No-Comparison theorem. As the region $\tilde{\mathcal{R}}$ decreases, the region $\mathcal{R}$ increases, a fact evident from the two sub-figures. The red and blue curves coincide at $\theta = 0$, $\pi/2$, and $\pi$. At $\theta = 0$ and $\theta = \pi$, the set $\mathcal{O}_\theta$ contains only a single measurement. The case $\theta = \pi/2$ is particularly interesting, as the three distinct observables lie in the great plane orthogonal to the $\hat{t}:=\tfrac{1}{\sqrt{3}}(1,1,1)$ axis. The relationship between the sharpness thresholds for parallel and anti-parallel configurations, in this case, is consistent with the claim of Proposition~\ref{prop2}}
\label{figs2}
\end{figure}

The asymmetric $[1|1]$ configuration enables perfect joint measurability of the $\mathtt{SyTet}$ observables using just $2$ qubits, whereas Proposition~\ref{prop1} implies that $4$ identical qubits are required in the symmetric configuration. This represents a $2$-qubit advantage per $\mathtt{SyTet}$ set. Consider a set $\textbf{n-}\mathtt{SyTet}$, comprising the union of $n$ distinct sets of 4 observables oriented along the vertices of different regular tetrahedrons, thereby containing $4n$ distinct spin observables. According to Proposition \ref{prop1} perfect joint measurability of the observables in $\textbf{n-}\mathtt{SyTet}$ requires the configuration $[4n]$ when considering symmetric configuration, whereas the same is possible in $[n|n]$ asymmetric configuration, thereby scaling the advantage to $2n$ qubits. Whether more sophisticated entangled measurements across multiple qubit could yield further advantage remains an open question for future investigation. At this stage, an intriguing question is whether the reversible behavior of complementarity occur without invoking spin-flipped counterparts in the configurations? We believe that for two sets of incompatible spin observables, $\mathcal{O}$ and $\mathcal{O}'$, with $|\mathcal{O}| \le |\mathcal{O}'|$, if $\lambda^{[1]}_{\mathcal{O}} \ge \lambda^{[1]}_{\mathcal{O}'}$ then $\lambda^{[k]}_{\mathcal{O}} \ge \lambda^{[k]}_{\mathcal{O}'}$ for all $k$. Although this claim is currently supported only by numerical evidence, a formal proof or counterexample would yield deeper insight into quantum complementarity in finite-copy symmetric configurations.

{\it Generic Nature of CDC.--} The configuration-dependent reversal of complementarity established by our No-Comparison Theorem is not a discrete phenomenon limited to specific sets like $\mathtt{SyTri}$ and $\mathtt{SyTet}$. To demonstrate its generality, we analyze a parametric family of three spin observables $\mathcal{O}_{\theta} \equiv \{\sigma_{\hat{n}_i}\}_{i=1}^3$, whose measurement directions $\{\hat{n}_1, \hat{n}_2, \hat{n}_3\}$ are symmetrically arranged with identical pairwise angles: $\hat{n}_i \cdot \hat{n}_j = \cos\varphi \quad \text{for } i \neq j$. This symmetry is captured by the parameterization: $\hat{n}_1= (\alpha, \beta, \beta)^{\mathrm{T}},~\hat{n}_2 = (\beta, \alpha, \beta)^{\mathrm{T}},~\hat{n}_3 = (\beta, \beta, \alpha)^{\mathrm{T}}$, where  $\alpha:= \tfrac{1}{3}\big(\sqrt{3}\cos\theta + \sqrt{6}\sin\theta\big)$ and $\beta:= \tfrac{1}{6}\big(2\sqrt{3}\cos\theta - \sqrt{6}\sin\theta\big)$, with $\theta := \cos^{-1}(\hat{t} \cdot \hat{n}_i) \in (0, \pi)$ and $\hat{t} = \tfrac{1}{\sqrt{3}}(1,1,1)^{\mathrm{T}}$ (see Figure~\ref{figs1}). Accordingly, we have $\varphi=\cos^{-1}\big((3\cos^2\theta-1)/2\big)$. While $\theta=\cos^{-1}\big(1/\sqrt{3}\big)$ corresponds to the $\mathtt{MUB}$ observables, $\theta=\pi/2$ represents the $\mathtt{SyTri}$ set. Our objective is to determine the optimal sharpness thresholds $\lambda^{[2]}_\theta$ and $\lambda^{[1|1]}_\theta$ that permit joint measurability of $\mathcal{O}_{\theta}$ on the parallel ($[2]$) and anti-parallel ($[1|1]$) configurations, respectively. Leveraging {\bf Result~1} and {\bf Result~2}, stated in Appendix, this optimization can be formulated as semidefinite programs (SDPs):
\begin{align}
&\underline{\text{Parallel Configuration $[2]$}}\nonumber\\
&\text{\bf Maximize:} \quad  \lambda^{[2]}_\theta \nonumber \\
&\text{Subject to:}~\Pi^{[2]}_{\mathbf{a}} \ge 0, ~\sum_{\mathbf{a}} \Pi^{[2]}_{\mathbf{a}} = \mathbf{I}_2 \otimes \mathbf{I}_2,~\&~\text{for~all}~r \nonumber \\
&\sum_{\mathbf{a} \setminus a_r} \Pi^{[2]}_{\mathbf{a}} = \tfrac{1}{2} \left( \mathrm{P}_{a_r \hat{n}_r}(\lambda^{[2]}_{\theta}) \otimes \mathbf{I}_2 + \mathbf{I}_2 \otimes \mathrm{P}_{a_r \hat{n}_r}(\lambda^{[2]}_{\theta}) \right).\nonumber\\
&\underline{\text{Anti-parallel Configuration $[1|1]$}}\nonumber\\
&\text{\bf Maximize:} \quad \lambda^{[1|1]}_\theta \nonumber \\
&\text{Subject to:}~ \Pi^{[1|1]}_{\mathbf{a}}\ge 0, ~ \sum_{\mathbf{a}} \Pi^{[1|1]}_{\mathbf{a}} = \mathbf{I}_2 \otimes \mathbf{I}_2,,~\&~\text{for~all}~r \nonumber \\
&\sum_{\mathbf{a} \setminus a_r} \Pi^{[1|1]}_{\mathbf{a}} = \tfrac{1}{2} \left( \mathrm{P}_{a_r \hat{n}_r}(\lambda^{[1|1]}_{\theta}) \otimes \mathbf{I}_2 + \mathbf{I}_2 \otimes \mathrm{P}_{-a_r \hat{n}_r}(\lambda^{[1|1]}_{\theta})\right).\nonumber
\end{align}
Solving these SDPs reveals configuration-dependent complementarity behavior for specific ranges of the parameter $\theta$, as depicted in Figure~\ref{figs2}.

{\it Discussion.--} Our No-Comparison Theorem establishes a structural revision of complementarity: in the finite-resource regime, quantum complementarity does not admit an absolute hierarchy. The ordering of incompatible observables depends on how quantum systems are configured. This insight connects two foundational developments of the last century—the formalization of information theory \cite{Shannon1948,Goldstine1961} and its unification with quantum mechanics \cite{Bennett1998}. Information, now understood as a physical primitive \cite{Landauer1961,Bennett1973,Zurek1989}, plays an operational role in shaping physical limits \cite{Wheeler,Deutsch2004}. The qubit, enabled by superposition yet constrained by no-cloning \cite{Wootters1982}, raises a central question: what information can be extracted from finite copies, and how does their arrangement affect measurement limits? We show that complementarity is not intrinsic to observables alone but emerges from the global configuration of finite resources.

A parallel can be drawn with entanglement theory. Entanglement, the archetypal non-classical correlation \cite{Einstein1935,Schrodinger1936,Bell1964}, is a resource under local operations and classical communication \cite{Horodeck2009} and central to composite-system structure \cite{Barnum2010,Torre2012,Naik2022,Patra2023}. Entangled states are only partially ordered: certain states are fundamentally incomparable \cite{Bennett1996,Nielsen1999,Vidal1999}. For instance, the Bell state $|\phi\rangle=(|00\rangle+|11\rangle)/\sqrt{2}$ and the two-qutrit state $|\psi\rangle=(|\phi\rangle+3|22\rangle)/\sqrt{10}$ reverse their ordering depending on whether entanglement entropy or Schmidt rank is used. Such inequivalence persists even asymptotically for nonlocal correlations \cite{Ghosh2024}. By contrast, the incomparability we identify concerns physical properties rather than states. While entanglement with a quantum memory modifies uncertainty bounds \cite{Berta2010}, here entanglement appears in the optimal measurement strategy itself. The reversed complementarity ordering between identical copies and antiparallel pairs reflects the configuration sensitivity of this measurement-induced entanglement.

Operationally, no single finite configuration universally optimizes information extraction. Whether identical copies or antiparallel arrangements are advantageous depends on the observables considered. This has implications for precision metrology \cite{Giovannetti2011} and quantum memory design \cite{Lvovsky2009}, where finite resources are unavoidable. From a resource-theoretic viewpoint, measurement incompatibility is increasingly treated as an operational resource \cite{Carmeli2019,Skrzypczyk2019,Buscemi2020}. The configuration-dependent complementarity uncovered here adds a structural dimension to this program, indicating that incompatibility cannot be fully characterized without accounting for global resource arrangement.

\begin{acknowledgements}
\noindent{\bf Acknowledgement}: We thankfully acknowledge the useful discussions with Mir Alimuddin, Trina Chakraborty, Pratik Ghosal, Subhendu B. Ghosh, Amit Mukerjee, Asutosh Rai, and Arup Roy. KA acknowledges support from the CSIR project 09/0575(19300)/2024-EMR-I. SGN acknowledges support from the CSIR project 09/0575(15951)/2022-EMR-I. SRC acknowledges support from University Grants Commission, India (reference no. 211610113404). MB acknowledges funding support from the National Quantum Mission (NQM), an initiative of the Department of Science and Technology, Govt. of India.
\end{acknowledgements}


%

\onecolumngrid
\section{Appendix}
\section{Proof of Proposition \ref{prop1}}
\noindent We begin by recalling the definition of permutation-symmetric subspace for $N$ qubits.
\begin{definition}\label{def4}
The permutation-symmetric subspace $\mathrm{Sym}^N(\mathbb{C}^2)\subset(\mathbb{C}^2)^{\otimes N}$ for an $N$-qubit systems is the subspace that remains invariant under all the permutations of $N$ qubits. A canonical basis for this subspace is the normalized Dicke states \cite{Dicke1954} 
\begin{align}
\ket{\widetilde{D}_N^{(r)}}:= \frac{1}{\sqrt{\Big(\substack{N\\r}\Big)}}
\sum_{\substack{\vec{m} \in \{0,1\}^N \\ |\vec{m}| = r}} 
\ket{\vec{m}},
\quad r\in\{0,1,\dots,N\},\nonumber  
\end{align}
where the Hamming weight $|\vec{m}|:= \sum_{i=1}^N m_i$ denotes the number of qubits in the excited state $\ket{1}$.    
\end{definition}
\noindent As it is evident, for the $N$-qubit system dimension of $\mathrm{Sym}^N(\mathbb{C}^2)$ is $(N+1)$. Each Dicke state $\ket{\widetilde{D}_N^{(r)}}$ represents a symmetric superposition of all computational basis states containing exactly $r$ excitations. For a system of identically prepared qubit states, the joint state always lies in this subspace. This plays a crucial role in our analysis of joint measurability on symmetric configuration. Before proceeding to the proof of Proposition~\ref{prop1}, we first present a lemma required for our analysis.
\begin{lemma}\label{lemma1}
Let $\{\ket{\psi_i}\}_{i=0}^{N}\subset\mathbb{C}^2$ be $N+1$ distinct qubit states. Then the set of their $N$-fold tensor powers $\left\{\ket{\psi_i}^{\otimes N}\right\}_{i=0}^{N}$ is linearly independent and spans the symmetric subspace $\mathrm{Sym}^N(\mathbb{C}^2)$.    
\end{lemma}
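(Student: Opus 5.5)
Here "distinct" must mean pairwise non-parallel as rays (projectively distinct), since $\ket{\psi}$ and $e^{i\phi}\ket{\psi}$ give the same tensor power up to phase. We read the statement with this convention.

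The plan is to reduce the claim to a Vandermonde determinant. First, every $\ket{\psi}^{\otimes N}$ is invariant under qubit permutations, so it lies in $\mathrm{Sym}^N(\mathbb{C}^2)$. By Definition~\ref{def4} this subspace has dimension $N+1$. It therefore suffices to show that the $N+1$ vectors $\ket{\psi_i}^{\otimes N}$ are linearly independent; spanning then follows by counting dimensions.

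To prove independence, write $\ket{\psi_i}=a_i\ket{0}+b_i\ket{1}$ and expand the tensor power in the Dicke basis:
\begin{align}
\ket{\psi_i}^{\otimes N}=\sum_{r=0}^{N}\sqrt{\tbinom{N}{r}}\,a_i^{N-r}b_i^{r}\,\ket{\widetilde{D}_N^{(r)}}.\nonumber
\end{align}
The coordinate matrix is $M_{ir}=\sqrt{\binom{N}{r}}\,a_i^{N-r}b_i^r$. Factoring out the nonzero column constants $\sqrt{\binom{N}{r}}$ leaves the homogeneous Vandermonde matrix $[a_i^{N-r}b_i^r]$. Its determinant equals, up to sign,
\begin{align}
\prod_{i<j}(a_ib_j-a_jb_i).\nonumber
\end{align}
Each factor $a_ib_j-a_jb_i$ is the $2\times 2$ determinant of the vectors $\ket{\psi_i},\ket{\psi_j}$. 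It vanishes only when these vectors are proportional, which is excluded by distinctness. Hence the determinant is nonzero and the vectors are linearly independent.

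The main obstacle is justifying the homogeneous Vandermonde identity cleanly. Two routes are available.

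One route is to split into cases. If all $a_i\neq 0$, factor out $a_i^N$ from each row. This gives the ordinary Vandermonde matrix in the ratios $t_i=b_i/a_i$, and distinctness makes these ratios pairwise different. At most one $a_i$ can be zero, since two states with $a_i=0$ would both be proportional to $\ket{1}$. If one $a_{i_0}=0$, its row is $(0,\dots,0,b_{i_0}^N)$ with $b_{i_0}\neq 0$. Laplace expansion along that row reduces the problem to an ordinary $N\times N$ Vandermonde in the remaining ratios.

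The cleaner route avoids cases altogether. Apply a single-qubit unitary $U$ such that no $U\ket{\psi_i}$ is proportional to $\ket{1}$, so all $a_i\neq0$; such a $U$ exists because only finitely many states must be avoided. Then use that $U^{\otimes N}$ is invertible and maps $\ket{\psi_i}^{\otimes N}$ to $(U\ket{\psi_i})^{\otimes N}$, so linear independence is preserved.

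I would present the unitary-rotation version, since it leaves only the standard Vandermonde determinant to invoke.
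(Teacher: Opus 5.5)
Your proposal is correct and follows essentially the same route as the paper: expand $\ket{\psi_i}^{\otimes N}$ in the Dicke basis, reduce to a Vandermonde determinant in the ratios $z_i=\beta_i/\alpha_i$, and change the qubit basis so that every $\alpha_i\neq 0$, which is exactly your unitary-rotation step. You also make explicit two points the paper leaves implicit: that spanning follows from the dimension count $\dim\mathrm{Sym}^N(\mathbb{C}^2)=N+1$, and that ``distinct'' must mean pairwise non-proportional.
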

\begin{proof}
Since the states  $\ket{\psi_i}^{\otimes N}$'s are invariant under the permutations, they lie in the symmetric subspace $\mathrm{Sym}^N(\mathbb{C}^2)$. We thus need to prove that $\mathrm{Span}\left\{\ket{\psi_i}^{\otimes N}\mid_{i=0}^{N}\right\}=\mathrm{Sym}^N(\mathbb{C}^2)$. Expressing $\ket{\psi_i}$ in computational basis as $\ket{\psi_i}= \alpha_i \ket{0}+ \beta_i \ket{1}$ with $\alpha_i, \beta_i \in \mathbb{C}$, we get
\begin{align}
\ket{\psi_i}^{\otimes N} 
&= (\alpha_i \ket{0}+ \beta_i \ket{1})^{\otimes N}\nonumber\\
&=\alpha_i^N\ket{0}^{\otimes N}+\alpha_i^{N-1}\beta_i~\sum_{\text{Perm}}\left(\ket{0}^{\otimes{N-1}}\otimes\ket{1}\right)+\alpha_i^{N-2}\beta^2_i~\sum_{\text{Perm}}\left(\ket{0}^{\otimes{N-2}}\otimes\ket{1}^{\otimes 2}\right)+\cdots+\beta_i^N\ket{1}^{\otimes N}\nonumber\\
&=\sum_{r=0}^{N} \alpha_i^{N-r}\beta_i^{r}~\sum_{\substack{\vec{m}\in\{0,1\}^N \\ |\vec{m}|=r}}\hspace{-.1cm}\ket{\vec{m}}= \sum_{r=0}^{N} \alpha_i^{N-r}\beta_i^{r}\, \ket{D_N^{(r)}},\nonumber
\end{align}
where $\ket{D_N^{(r)}}$ denotes the unnormalized Dicke state with $r$ excitations. Factoring out $\alpha_i^{N}$ and defining $z_i = \beta_i / \alpha_i$, we can write
\begin{align}
\ket{\psi_i}^{\otimes N} \propto \sum_{r=0}^{N} z_i^{r}\, \ket{D_N^{(r)}}.\nonumber
\end{align}
Note that, we can always ensure $\alpha_i$'s to be nonzero by expressing $\ket{\psi_i}$'s in suitably chosen orthonormal basis of $\mathbb{C}^2$. For instance, let $N=2$ and $\{\ket{\psi_i}\}_{i=0}^{N}=\{\ket{0},\ket{1},\ket{+}\}$ then the states can be expressed in Hadamard basis. Thus in Dicke basis the state $\ket{\psi_i}^{\otimes N}$ is  proportional to the column vector $(1, z_i, z_i^2, \dots, z_i^N)^{\mathrm{T}}$. Consider the $(N+1)\times(N+1)$ matrix $\mathbf{V}$ with $i^{th}$ row being $(1, z_i, z_i^2, \dots, z_i^N)$, i.e.
\begin{align}
\mathbf{V}=
\begin{pmatrix}
1 & z_0 & z_0^2 & \cdots & z_0^N\\
1 & z_1 & z_1^2 & \cdots & z_1^N\\
\vdots & \vdots & \vdots & \ddots & \vdots\\
1 & z_N & z_N^2 & \cdots & z_N^N
\end{pmatrix}.\nonumber
\end{align}
This is the well known \textbf{Vandermonde} matrix \cite{Klinger1967} with determinant
\begin{align}
\det(\mathbf{V}) = \prod_{0 \le i < j \le N} (z_j - z_i).\nonumber    
\end{align}
Qubit states $\ket{\psi_i}$'s being distinct ensure $z_i$'s are distinct, and hence $\det({\bf V}) \neq 0$. Therefore, the rows of ${\bf V}$ are linearly independent, implying that the vectors 
$\left\{\ket{\psi_i}^{\otimes N}\right\}_{i=0}^N$ are linearly independent. This completes the proof.
\end{proof}

\begin{table}[t!]
\begin{tabular}{c|c|c|c|c|}
\hline
Local dim. & Proj. space & Sym. $N$-fold dim. & Veronese image & Max \# of independent states\\ \hline\hline
$2$ (qubit)& $\mathbb{P}^1$ & $N+1$ & Veronese curve & $N+1$\\ \hline
$3$ (qutrit)& $\mathbb{P}^2$ & $(N+1)(N+2)/2$ & Veronese surface & $(N+1)(N+2)/2$\\ \hline
$d$ (qudit)& $\mathbb{P}^{d-1}$ & $\big(\substack{N+d+1\\d-1}\big)$ & Veronese variety & $\big(\substack{N+d+1\\d-1}\big)$\\ \hline
\end{tabular}
\end{table}

\noindent With a bit of digress, here we note that Lemma~\ref{lemma1} can be seen as a special case of {\bf Veronese embedding} \cite{Harris1992}, which ensures that given $N+1$ distinct points $[\psi_i]\in\mathbb{P}^1$, the Veronese images ${\nu_N}\left([\psi_i]\right)\in\mathbb{P}^N$ will be linearly independent. With this we are now in a position to prove our Proposition~\ref{prop1}. \\

\noindent{\bf Proof of Proposition 1}
\begin{proof}
We proceed by contradiction. Let $\mathcal{O}\equiv\{\sigma_{\hat{n}_r}\}_{r=1}^{N}$ be a set of $N$ distinct spin observables. Assume, contrary to the claim of Proposition {\bf 1}, that there exists a joint measurement $\mathcal{G}^{[k]}\equiv\left\{\Pi^{[k]}_{\mathbf{a}}~|~\Pi^{[k]}_{\mathbf{a}}\ge0~\&~\sum_{\mathbf{a}}\Pi^{[k]}_{\mathbf{a}}=\mathbf{I}^{\otimes k}_2\right\}$ on $k~(= N-1)$ identical copies of the system that simultaneously measures all $N$ observables in $\mathcal{O}$, namely
\begin{align}
\sum_{\mathbf{a}\setminus a_r} &
\mathrm{Tr}\!\left[ \Pi^{[k]}_{\mathbf{a}}\rho_{\vec{m}}^{\otimes k} \right]
= \mathrm{Tr}\!\left[ \mathrm{P}_{a_r\hat{n}_r} \rho_{\vec{m}} \right],~~\forall~\rho_{\vec{m}}\in\mathcal{D}(\mathbb{C}^2).\nonumber
\end{align}
Here $\mathrm{P}_{a_r\hat{n}_r}=\ket{\psi_{a_r\hat{n}_r}}\bra{{\psi_{a_r\hat{n}_r}}}$ denotes the eigen projector of $\sigma_{\hat{n}_r}$ corresponding to the eigenvalue $a_r = \pm 1$. Perfect joint measurability demands reproducing the measurement statistic on all states $\rho_{\vec{m}}\in\mathcal{D}(\mathbb{C}^2)$, implying 
\begin{align}
\sum_{\mathbf{a}\setminus a_r}\mathrm{Tr}\left[ \Pi^{[k]}_{\mathbf{a}}\mathrm{P}_{-a_r\hat{n}_r}^{~\otimes k}\right]=\mathrm{Tr}\!\left[\mathrm{P}_{a_r\hat{n}_r}\mathrm{P}_{-a_r\hat{n}_r} \right]= 0~\forall~r.\nonumber 
\end{align}
Since $\Pi^{[k]}_{\mathbf{a}}\ge0$ for all $\mathbf{a}$, we thus have
\begin{align}
\mathrm{Tr}\left[ \Pi^{[k]}_{\mathbf{a}}\mathrm{P}_{-a_r\hat{n}_r}^{~\otimes k} \right] = 0~~ \forall~ r, \mathbf{a}.\nonumber    
\end{align}
Therefore, for every outcome string $\mathbf{a}\in\{+1,-1\}^N$, the operator $\Pi^{[k]}_{\mathbf{a}}$
is orthogonal to each $\mathrm{P}_{-a_r\hat{n}_r}^{~\otimes k}$ implying
\begin{align}
\mathrm{Support}\left\{\Pi^{[k]}_{\mathbf{a}}\right\}\perp\mathrm{Span}\left\{\ket{\psi_{-a_r\hat{n}_r}}^{\otimes k}\mid_{r=1}^N\right\}.\nonumber   
\end{align}
By Lemma {\bf 1}, we have 
\begin{align}
\mathrm{Span}\left\{\ket{\psi_{-a_r\hat{n}_r}}^{\otimes k}\mid_{r=1}^N\right\}=\mathrm{Sym}^{k}(\mathbb{C}^2)=\mathrm{Sym}^{N-1}(\mathbb{C}^2).\nonumber  
\end{align}
Therefore, for $\mathbf{a}\in\{+1,-1\}^N$, $\Pi^{[k]}_{\mathbf{a}}$ is supported on the orthogonal complement of $\mathrm{Sym}^{N-1}(\mathbb{C}^2)$ which forms a $2^{N-1}-N$ dimensional subspace of $(\mathbb{C}^2)^{\otimes N-1}$. This further implies $\sum_{\bf a}\Pi^{[k]}_{\mathbf{a}}<\mathbf{I}_2^{\otimes N-1}$, establishing that $\mathcal{G}^{[k]}$ is not a valid POVM on $(\mathbb{C}^2)^{\otimes N-1}$. This completes the proof.
\end{proof}

\section{Proof of Proposition \ref{prop2}}
\noindent Here we start by recalling two results from prior works that will be used in our proof.

\noindent{\bf Result 1} [{\it Theorem 1 of \cite{Carmeli2016}}]: A set of spin observables $\mathcal{O}= \{\sigma_{\hat{n}_r}\}_r$ is jointly measurable on two identical qubits (i.e. the parallel configuration $[2]$) up to the sharpness threshold $\lambda^{[2]}_{\mathcal{O}}$, if and only if there exists a POVM $\mathcal{G}^{[2]}_{\mathcal{O}}\equiv \{\Pi^{[2]}_{\bf a} ~|~\sum_{{\bf a}} \Pi^{[2]}_{\bf a} = \mathbf{I}_2^{\otimes 2}\}$, such that $\sum_{{\bf a} \setminus a_r} \Pi^{[2]}_{{\bf a}}=\tfrac{1}{2}\big(\mathrm{P}_{a_r\hat{n}_r}(\lambda^{[2]}_{\mathcal{O}}) \otimes \mathbf{I}_2 + \mathbf{I}_2 \otimes \mathrm{P}_{a_r\hat{n}_r}(\lambda^{[2]}_{\mathcal{O}})\big)~\forall~r$.\\

\noindent{\bf Result 2} [{\it Proposition 1 of \cite{Patra2026}}]: A set of spin observables $\mathcal{O}= \{\sigma_{\hat{n}_r}\}_r$ is jointly measurable on anti-parallel configuration (i.e. the $[1|1]$ configuration) up to the sharpness threshold $\lambda^{[1|1]}_{\mathcal{O}}$, if and only if there exists a POVM $\mathcal{G}^{[1|1]}_{\mathcal{O}}\equiv \{\Pi^{[1|1]}_{\bf a} ~|~\sum_{{\bf a}} \Pi^{[1|1]}_{\bf a} = \mathbf{I}_2^{\otimes 2}\}$, such that $\sum_{{\bf a} \setminus a_r} \Pi^{[1|1]}_{{\bf a}}=\tfrac{1}{2}\big(\mathrm{P}_{a_r\hat{n}_r}(\lambda^{[1|1]}_{\mathcal{O}}) \otimes \mathbf{I}_2 + \mathbf{I}_2 \otimes \mathrm{P}_{-a_r\hat{n}_r}(\lambda^{[1|1]}_{\mathcal{O}})\big)~\forall~r$.\\

\noindent The authors in \cite{Carmeli2016} established a more general result for the symmetric configuration $[k]$. For our purposes, we restate their finding specifically for the $k=2$ case, adapting it to the notation of our present work. With this we now proceed to prove the Proposition {\bf 2}.\\

\noindent{\bf Proof of Proposition 2}
\begin{proof}
Let $\mathcal{O}\equiv \{\sigma_{\hat{n}_r}\}$ be a set of $N$ distinct observables with all the measurement directions lying in a plane. Without loss of any generality, we can consider the plane to be $XZ$-plane. Assume there exists a joint measurement
$\mathcal{G}^{[1|1]}_{\mathcal{O}}\equiv \left\{\Pi^{[1|1]}_{\mathbf{a}}|~\Pi^{[1|1]}_{\mathbf{a}}\geq 0~ \&~ \sum_{\mathbf{a}}\Pi^{[1|1]}_{\mathbf{a}} = \mathbf{I}^{\otimes 2}_{2}\right\}$ on antiparallel configuration that jointly measures all observables in $N$ up to an optimal sharpness threshold $\lambda^{[1|1]}_{\mathcal{O}}$, i.e. $\forall~\rho_{\vec{m}}\in\mathcal{D}(\mathbb{C}^2)$ and $\forall~r$
\begin{align}
\sum_{\mathbf{a}\setminus a_{r}}\tr\left[\Pi^{[1|1]}_{\mathbf{a}}~ (\rho_{\vec{m}}\otimes \rho_{-\vec{m}})\right]&= \tr\left[\frac{1}{2}\left(\mathrm{P}_{a_r\hat{n}_r}(\lambda^{[1|1]}_{\mathcal{O}})\otimes\mathbf{I}_2+\mathbf{I}_2\otimes\mathrm{P}_{-a_r\hat{n}_r}(\lambda^{[1|1]}_{\mathcal{O}})\right)~ (\rho_{\vec{m}}\otimes\rho_{-\vec{m}})\right]\nonumber\\
&=\tr\left[ \mathrm{P}_{a_r\hat{n}_r}(\lambda^{[1|1]}_{\mathcal{O}})~ \rho_{\vec{m}} \right].\nonumber
\end{align}
Let $U_{\hat{n}}$ denotes the unitary operator that employs $\pi$ rotation of the Bloch sphere about the axis $\hat{n}$. For the unitary operator  $U_{\hat{y}}$ we have  
\begin{align}
&U_{\hat{y}}(n_0\mathbf{I}_2+n_x\sigma_x+n_y\sigma_y+n_z\sigma_z)U^\dagger_{\hat{y}}\nonumber\\
&=n_0\mathbf{I}_2-n_x\sigma_x+n_y\sigma_y-n_z\sigma_z,~~\forall~n_0,n_x,n_y,n_z\in\mathbb{R}.\nonumber \label{yuni} 
\end{align}
Consider now a set of operators $\{\tilde{\Pi}_{\mathbf{a}}\}_{\bf a}$, with $\tilde{\Pi}_{\mathbf{a}}:= \mathbf{I}_2\otimes U_{\hat{y}}\big(~\Pi^{[1|1]}_{\mathbf{a}}\big)\mathbf{I}_2\otimes U^\dagger_{\hat{y}}$. Manifestly, each of the operators $\tilde{\Pi}_{\mathbf{a}}$'s are positive and
\begin{align}
\sum_{\mathbf{a}}\tilde{\Pi}_{\mathbf{a}}&=\sum_{\mathbf{a}}\mathbf{I}_2\otimes U_{\hat{y}}\Big(~\Pi^{[1|1]}_{\mathbf{a}}\Big)\mathbf{I}_2\otimes U^\dagger_{\hat{y}}=\mathbf{I}_2\otimes U_{\hat{y}}\Big(\sum_{\mathbf{a}}\Pi^{[1|1]}_{\mathbf{a}}\Big)\mathbf{I}_2\otimes U^\dagger_{\hat{y}}=\mathbf{I}_2\otimes\mathbf{I}_2,\nonumber
\end{align}
thereby ensuring the set $\{\tilde{\Pi}_{\mathbf{a}}\}_{\bf a}$ to be a valid POVM. We will now show that this POVM when applied on two qubits prepared in parallel configuration reproduce the measurement statistics of all observables in $\mathcal{O}$ up to sharpness threshold $\lambda^{[1|1]}_{\mathcal{O}}$:    
\begin{align}
&\sum_{\mathbf{a}\setminus a_{r}}
\tr\left[\tilde{\Pi}_{\mathbf{a}}~(\rho_{\vec{m}}\otimes \rho_{\vec{m}})\right]\nonumber\\
&= \sum_{\mathbf{a}\setminus a_{r}}
\tr\left[\mathbf{I}_2\otimes U_{\hat{y}}\left(\Pi^{[1|1]}_{\mathbf{a}}\right)\mathbf{I}_2 \otimes U^\dagger_{\hat{y}}~ (\rho_{\vec{m}}\otimes \rho_{\vec{m}})\right]\nonumber\\
&= \tr\Big[\mathbf{I}_2 \otimes U_{\hat{y}}\Big(\sum_{\mathbf{a}\setminus a_{r}}\Pi^{[1|1]}_{\mathbf{a}}\Big)\mathbf{I}_2 \otimes U^\dagger_{\hat{y}}~ (\rho_{\vec{m}}\otimes\rho_{\vec{m}})\Big]\nonumber\\
&= \tr\left[\mathbf{I}_2\otimes U_{\hat{y}}\left(\tfrac{1}{2}\big(\mathrm{P}_{a_r\hat{n}_r}(\lambda^{[1|1]}_{\mathcal{O}})\otimes\mathbf{I}_2+\mathbf{I}_2\otimes\mathrm{P}_{-a_r\hat{n}_r}(\lambda^{[1|1]}_{\mathcal{O}})\big)\right)\mathbf{I}_2\otimes U^\dagger_{\hat{y}}~(\rho_{\vec{m}}\otimes\rho_{\vec{m}}) \right].\nonumber 
\end{align}
This further implies
\begin{align}
&\sum_{\mathbf{a}\setminus a_{r}}
\tr\left[\tilde{\Pi}_{\mathbf{a}}~(\rho \otimes \rho)\right]= \tr\left[\tfrac{1}{2}\big(\mathrm{P}_{a_r\hat{n}_r}(\lambda^{[1|1]}_{\mathcal{O}})\otimes\mathbf{I}_2+ \mathbf{I}_2\otimes\mathrm{P}_{a_r\hat{n}_r}(\lambda^{[1|1]}_{\mathcal{O}})\big)~(\rho_{\vec{m}}\otimes\rho_{\vec{m}}) \right]\nonumber\\
&\hspace{0.5cm}=\tr\left[\mathrm{P}_{a_r\hat{n}_r}(\lambda^{[1|1]}_{\mathcal{O}})~\rho_{\vec{m}}\right].
\end{align}
This establishes that the optimal sharpness threshold $\lambda^{[1|1]}_{\mathcal{O}}$ achieved on antiparallel configuration can be achieved on the parallel configuration as well, namely $\lambda^{[2]}_{\mathcal{O}}\ge\lambda^{[1|1]}_{\mathcal{O}}$.
To show that the converse is also true, we start by assuming a joint measurement
$\mathcal{G}^{[2]}_{\mathcal{O}}\equiv \left\{\Pi^{[2]}_{\mathbf{a}}|~\Pi^{[2]}_{\mathbf{a}}\geq 0~ \&~ \sum_{\mathbf{a}}\Pi^{[2]}_{\mathbf{a}} = \mathbf{I}^{\otimes 2}_{2}\right\}$ that on the parallel configuration jointly measures the observables in $\mathcal{O}$ up to an optimal sharpness threshold $\lambda^{[2]}_{\mathcal{O}}$, and then repeat the steps as above. This completes the proof.
\end{proof}

\section{Proof of the No-Comparison Theorem \ref{theo1}}
\begin{proof}
The observables in $\mathtt{SyTri}$ is trivially jointly measurable on three identical copies of a qubit state, implying $\lambda^{[3]}_{\mathtt{SyTri}}=1$. On the other hand, Proposition~\ref{prop2} implies $\lambda^{[1|1]}_{\mathtt{SyTri}}=\lambda^{[2]}_{\mathtt{SyTri}}$, which according to Proposition~\ref{prop1} is strictly less than unity. Proposition~\ref{prop1} also implies that $\lambda^{[3]}_{\mathtt{SyTet}}<1$. We are thus left with to show perfect joint measurability of the observables in $\mathtt{SyTet}$ on anti-parallel configuration, for which we will now provide an explicit construction. Without loss of any generality, we fix $\mathtt{SyTet}\equiv\{\sigma_{\hat{n}_r}~|~r=0,1,2,3\}$ with
\begin{align}
&\left\{\!\begin{aligned}
&\hat{n}_0=\tfrac{1}{\sqrt{3}}\big(1,1,1\big);~~~~~~~~~\hat{n}_1=\tfrac{1}{\sqrt{3}}\big(1,-1,-1\big);\\
&\hat{n}_2=\tfrac{1}{\sqrt{3}}\big(-1,1,-1\big);~~\hat{n}_3=\tfrac{1}{\sqrt{3}}\big(-1,-1,1\big)
\end{aligned}\right\},\nonumber
\end{align}
where $\delta_{ab}$ is the Kronecker delta function and bar denotes bit-wise not operation. Consider now a set of six rank-1 positive operators $\mathcal{G}^{[1|1]}_{\mathtt{SyTet}}\equiv\left\{\Pi^{[1|1]}_{\pm\hat{\alpha}}~|~\alpha=x,y,z\right\}$. All of this operators are positive and rank-1, as evident form the expression 
\begin{align}
\Pi^{[1|1]}_{\pm\hat{\alpha}}&=\tfrac{2}{3}\ket{\xi_{\pm\hat{\alpha}}}\bra{\xi_{\pm\hat{\alpha}}};\nonumber\\
\ket{\xi_{\hat{\alpha}}}&:=\tfrac{\sqrt{3}+1}{2\sqrt{2}}\ket{\hat{\alpha},-\hat{\alpha}}+\tfrac{\sqrt{3}-1}{2\sqrt{2}}\ket{-\hat{\alpha},\hat{\alpha}}.\nonumber    
\end{align}
Furthermore, expressing them in Hilbert Schmidt basis 
\begin{align}
\Pi^{[1|1]}_{\pm\hat{\alpha}}&=\tfrac{1}{12} \Big(2 \mathbf{I}_2^{\otimes 2} -2{\sigma}_{\hat{\alpha}}^{\otimes2}
\pm\sqrt{3}\dcb{\sigma_{\hat{\alpha}},\mathbf{I}_2}~+ {\sigma}_{\hat{\beta}}^{\otimes 2}+ {\sigma}_{\hat{\gamma}}^{\otimes 2}\Big),\nonumber
\end{align}
where $\alpha,\beta,\gamma\in\{x,y,z\}$ with $\alpha\neq\beta\neq\gamma$, and $\dcb{A,B}~:=A\otimes B-B\otimes A$, it immediately follows $\sum_{\alpha=x,y,z} \left(\Pi^{[1|1]}_{\hat{\alpha}}+\Pi^{[1|1]}_{-\hat{\alpha}}\right)=\mathbf{I}_2\otimes\mathbf{I}_2$, thereby ensuring $\mathcal{G}^{[1|1]}_{\mathtt{SyTet}}$ to be a valid POVM on $2$-qubit Hilbert space. As shown below, this POVM when applied on $\rho_{\hat{m}}\otimes\rho_{-\hat{m}}$ reproduces measurement statistics of $\mathtt{SyTet}$ observables on the state $\rho_{\hat{m}}$ 
\begin{align}
&\tr\left[\left(\Pi^{[1|1]}_{\hat{x}}+\Pi^{[1|1]}_{\hat{y}}+\Pi^{[1|1]}_{\hat{z}}\right)\left(\rho_{\hat{m}}\otimes\rho_{-\hat{m}}\right)\right]=\tr\left[\mathrm{P}_{\hat{n}_0}\rho_{\hat{m}}\right]\nonumber\\
&\tr\left[\left(\Pi^{[1|1]}_{\hat{x}}+\Pi^{[1|1]}_{-\hat{y}}+\Pi^{[1|1]}_{-\hat{z}}\right)\left(\rho_{\hat{m}}\otimes\rho_{-\hat{m}}\right)\right]=\tr\left[\mathrm{P}_{\hat{n}_1}\rho_{\hat{m}}\right]\nonumber\\
&\tr\left[\left(\Pi^{[1|1]}_{-\hat{x}}+\Pi^{[1|1]}_{\hat{y}}+\Pi^{[1|1]}_{-\hat{z}}\right)\left(\rho_{\hat{m}}\otimes\rho_{-\hat{m}}\right)\right]=\tr\left[\mathrm{P}_{\hat{n}_2}\rho_{\hat{m}}\right]\nonumber\\
&\tr\left[\left(\Pi^{[1|1]}_{-\hat{x}}+\Pi^{[1|1]}_{-\hat{y}}+\Pi^{[1|1]}_{\hat{z}}\right)\left(\rho_{\hat{m}}\otimes\rho_{-\hat{m}}\right)\right]=\tr\left[\mathrm{P}_{\hat{n}_3}\rho_{\hat{m}}\right]\nonumber   
\end{align}
This completes the proof.
\end{proof}

\noindent Here we would like to point out that this construction is motivated from the observation that the qubit POVM $\mathcal{G}^{[1]}_{\mathtt{SyTet}}$ consisting of the effects 
\begin{align}
\left\{\Pi^{[1]}_{\pm\hat{\alpha}}:=\tfrac{2}{3}\mathrm{P}_{\pm\hat{\alpha}}~|~\alpha\in\{x,y,z\}\right\}\nonumber
\end{align}
reproduces measurement statistics of the observables in $\mathtt{SyTet}$ up to an optimal sharpness threshold $\lambda=1/\sqrt{3}$, namely 
\begin{align}
\left\{\begin{aligned}
&\tfrac{2}{3}\mathrm{P}_{\hat{x}}+\tfrac{2}{3}\mathrm{P}_{\hat{y}}+\tfrac{2}{3}\mathrm{P}_{\hat{z}}=\mathrm{P}_{\hat{r}_0}(\lambda),\\
&\tfrac{2}{3}\mathrm{P}_{\hat{x}}+\tfrac{2}{3}\mathrm{P}_{-\hat{y}}+\tfrac{2}{3}\mathrm{P}_{-\hat{z}}=\mathrm{P}_{\hat{r}_1}(\lambda),\\
&\tfrac{2}{3}\mathrm{P}_{-\hat{x}}+\tfrac{2}{3}\mathrm{P}_{\hat{y}}+\tfrac{2}{3}\mathrm{P}_{-\hat{z}}=\mathrm{P}_{\hat{r}_2}(\lambda),\\
&\tfrac{2}{3}\mathrm{P}_{-\hat{x}}+\tfrac{2}{3}\mathrm{P}_{-\hat{y}}+\tfrac{2}{3}\mathrm{P}_{\hat{z}}=\mathrm{P}_{\hat{r}_3}(\lambda)
\end{aligned}\right\}.\nonumber 
\end{align}

\begin{remark}\label{remark1}
Notably, the Bloch vectors of the projectors in observable set $\mathrm{SyTet}\equiv \{\sigma_{\hat{n}_i}\}_{i=0}^3$ forms a symmetric cube embedded in the Bloch sphere, whereas Bloch vectors of the reduced part of the effects in $\mathcal{G}^{[1|1]}_{\mathtt{SyTet}}$ form a symmetric octahedron with its vertices sitting on the faces of this cube.    
\end{remark}
\begin{observation}\label{obs1}
Note that the generality of Propositions~\ref{prop1} and \ref{prop2} allow us to replace the set $\mathtt{SyTri}$ by any set of three distinct spin observables from an equatorial plane in the No-Comparison theorem, and still the same conclusion holds.   
\end{observation}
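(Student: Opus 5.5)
The plan is to rerun the proof of the No-Comparison Theorem~\ref{theo1} with $\mathtt{SyTri}$ replaced by an arbitrary set $\mathcal{O}\equiv\{\sigma_{\hat{n}_r}\}_{r=1}^{3}$ of three distinct spin observables whose directions lie in one great plane of the Bloch sphere. By ``distinct'' I mean $\hat{n}_r\neq\pm\hat{n}_s$ for $r\neq s$; antipodal directions give the same observable up to relabelling of outcomes. The $\mathtt{SyTet}$ half of the theorem does not involve $\mathtt{SyTri}$ at all, so it carries over verbatim. Proposition~\ref{prop1} with $N=4$ gives $\lambda^{[3]}_{\mathtt{SyTet}}<1$, and the explicit six-outcome POVM $\mathcal{G}^{[1|1]}_{\mathtt{SyTet}}$ gives $\lambda^{[1|1]}_{\mathtt{SyTet}}=1$. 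Only the two statements about $\mathcal{O}$ need to be re-established.

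\textbf{Perfect joint measurability on $[3]$.} I will use the product POVM $\{\mathrm{P}_{a_1\hat{n}_1}\otimes\mathrm{P}_{a_2\hat{n}_2}\otimes\mathrm{P}_{a_3\hat{n}_3}\}_{\mathbf{a}}$. On $\rho_{\vec{m}}^{\otimes 3}$, marginalising over $\mathbf{a}\setminus a_r$ leaves exactly $\tr[\mathrm{P}_{a_r\hat{n}_r}\rho_{\vec{m}}]$. Hence $\lambda^{[3]}_{\mathcal{O}}=1$ for any three observables, planar or not.

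\textbf{Failure on $[1|1]$.}
\begin{enumerate}
\item First I reduce to the plane used in Proposition~\ref{prop2}. A global unitary $V$ rotates the common plane of the $\hat{n}_r$ onto the $XZ$-plane.
\item Conjugating any joint measurement by $V\otimes V$ maps $\rho_{\vec{m}}\otimes\rho_{-\vec{m}}$ to $\rho_{R\vec{m}}\otimes\rho_{-R\vec{m}}$, and likewise for the parallel configuration. So both $\lambda^{[2]}$ and $\lambda^{[1|1]}$ are rotation invariant, and Proposition~\ref{prop2} applies to give $\lambda^{[1|1]}_{\mathcal{O}}=\lambda^{[2]}_{\mathcal{O}}$.
\item Proposition~\ref{prop1} with $N=3$ then gives $\lambda^{[2]}_{\mathcal{O}}<1$, hence $\lambda^{[1|1]}_{\mathcal{O}}<1$.
\end{enumerate}

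\textbf{Main obstacle.} The only delicate point is whether Proposition~\ref{prop1} really applies to an arbitrary planar triple. Its proof invokes Lemma~\ref{lemma1} on the vectors $\{\ket{\psi_{-a_r\hat{n}_r}}\}_{r=1}^{3}$ for every outcome string $\mathbf{a}$, which requires these three qubit states to be pairwise distinct for every sign choice. This is precisely the condition $-a_r\hat{n}_r\neq -a_s\hat{n}_s$ for all signs, i.e.\ $\hat{n}_r\neq\pm\hat{n}_s$, which is our notion of distinctness. With that verified, the lemma shows the three $2$-fold tensor powers span $\mathrm{Sym}^2(\mathbb{C}^2)$. Every effect $\Pi_{\mathbf{a}}$ of a perfect two-copy joint measurement must then be supported on the one-dimensional antisymmetric complement, so the effects cannot sum to $\mathbf{I}_2^{\otimes 2}$ and no such POVM exists.

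No property specific to the equilateral arrangement enters anywhere. Combining the two parts for $\mathcal{O}$ with the unchanged $\mathtt{SyTet}$ half reproduces the reversal of Theorem~\ref{theo1} with $\mathcal{O}$ in place of $\mathtt{SyTri}$.
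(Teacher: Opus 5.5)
Your proposal is correct and follows the paper's own route. You keep the $\mathtt{SyTet}$ half unchanged, note that $\lambda^{[3]}_{\mathcal{O}}=1$ trivially, and combine Proposition~\ref{prop2} with Proposition~\ref{prop1} at $N=3$ to obtain $\lambda^{[1|1]}_{\mathcal{O}}=\lambda^{[2]}_{\mathcal{O}}<1$. Two of your steps spell out what the paper leaves implicit: the explicit rotation of an arbitrary great plane onto the $XZ$-plane, and the condition $\hat{n}_r\neq\pm\hat{n}_s$, which guarantees that the states $\ket{\psi_{-a_r\hat{n}_r}}$ in Lemma~\ref{lemma1} are distinct for every sign string.
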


\subsection{Optimal Sharpness Thresholds: $\lambda^{[2]}_{\mathtt{SyTri}}$ and $\lambda^{[3]}_{\mathtt{SyTet}}$}
\subsubsection{Analysis of $\mathtt{SyTri}$}
\noindent As established in our Proposition~\ref{prop2} and also verified through SDPs (see Figure \ref{figs2}), $\lambda^{[2]}_{\mathtt{SyTri}}=\lambda^{[1|1]}_{\mathtt{SyTri}}$. Here we provide the optimal value of this sharpness parameter along with the POVM. We consider $\mathtt{SyTri}\equiv\mathcal{O}_{\pi/2}\equiv\{\sigma_{\hat{n}_r}~|~r=1,2,3\}$ with 
\begin{align}
\hat{n}_1= \frac{1}{\sqrt{6}}(2, -1, -1)^{\mathrm{T}},~
\hat{n}_2= \frac{1}{\sqrt{6}}(-1, 2, -1)^{\mathrm{T}},~\hat{n}_3= \frac{1}{\sqrt{6}}(-1, -1, 2)^{\mathrm{T}}.\nonumber
\end{align}

\noindent Effects of the optimal POVM $\mathcal{G}^{[2]}_{\mathtt{SyTri}}\equiv\left\{\Pi^{[2]}_{\bf a}~|~a_i=\pm1\right\}$ can be expressed as
\begin{align}
\Pi^{[2]}_{\bf a}&=\tfrac{1}{8}\left[\tfrac{1}{4}\left(\beta^{\mathbf{I}_2\mathbf{I}_2}_{\bf a}~\mathbf{I}^{\otimes2}_{2}\right)+\tfrac{2}{3\sqrt{3}}\left(\beta^{X\mathbf{I}_2}_{\bf a}~\dacb~{X,\mathbf{I}_2}+\beta^{Y\mathbf{I}_2}_{\bf a}~\dacb~{Y,\mathbf{I}_2}+\beta^{X\mathbf{I}_2}_{\bf a}~\dacb~{X,\mathbf{I}_2}\right)\right.\nonumber\\
&\hspace{0cm}\left.+\tfrac{2}{9}\left(\beta^{XY}_{\bf a}~\dacb~{X,Y}+\beta^{YZ}_{\bf a}~\dacb~{Y,Z}+\beta^{ZX}_{\bf a}~\dacb~{Z,X}\right)+\tfrac{1}{36}\left(\beta^{XX}_{\bf a}~X^{\otimes2}+\beta^{YY}_{\bf a}~Y^{\otimes2}+\beta^{ZZ}_{\bf a}~Z^{\otimes2}\right)\right],\nonumber
\end{align}
with $\dacb~{A,B}:=A\otimes B+B\otimes A$, and is completely specified by the coefficients provide in Table \ref{tab1}. A straightforward calculation ensures that the POVM $\mathcal{G}^{[2]}_{\mathtt{SyTri}}$ satisfies
\begin{align}
\sum_{\mathbf{a} \setminus a_r} \Pi^{[2]}_{\mathbf{a}} = \frac{1}{2} \left( \mathrm{P}_{a_r \hat{n}_r}\left(\tfrac{2\sqrt{2}}{3}\right) \otimes \mathbf{I}_2 + \mathbf{I}_2 \otimes \mathrm{P}_{a_r \hat{n}_r}\left(\tfrac{2\sqrt{2}}{3}\right)\right),\nonumber
\end{align}
for $r\in\{1,2,3\}$, thereby jointly measuring the triple spin observables $\mathcal{O}_{\pi/2}\equiv\{\sigma_{\hat{n}_r}\}$ up to the sharpness threshold $\lambda^{[2]}_{\mathtt{SyTri}}=\frac{2\sqrt{2}}{3}\approx0.94281$. 
\begin{table*}[h!]
\centering
\begin{tabular}{c||c|c|c|c|c|c|c|c|c|c|}
&~$\beta^{\mathbf{I}_2\mathbf{I}_2}_{\bf a}$~&~ $\beta^{X\mathbf{I}_2}_{\bf a}$~&~ $\beta^{Y\mathbf{I}_2}_{\bf a}$~ & ~$\beta^{Z\mathbf{I}_2}_{\bf a}$~&~$\beta^{XY}_{\bf a}$ ~&~ $\beta^{YX}_{\bf a}$ ~&~ $\beta^{ZX}_{\bf a}$ ~&~ $\beta^{XX}_{\bf a}$ ~&~$\beta^{YY}_{\bf a}$ ~&~ $\beta^{ZZ}_{\bf a}$ \\ \hline\hline

$\Pi^{[2]}_{+1+1+1}$ & $+1$ & $0$ & $0$ & $0$ & $0$ & $0$ & $0$ & $-9$ & $-9$ & $-9$  \\ \hline

$\Pi^{[2]}_{+1+1-1}$ & $+5$ & $+1$ & $+1$ & $-2$ & $+2$ & $-1$ & $-1$ & $-5$ & $-5$ & $+19$  \\ \hline

$\Pi^{[2]}_{+1-1+1}$ & $+5$ & $+1$ & $-2$ & $+1$ & $-1$ & $-1$ & $+2$ & $-5$ & $+19$ & $-5$  \\ \hline

$\Pi^{[2]}_{+1-1-1}$ & $+5$ & $+2$ & $-1$ & $-1$ & $-1$ & $+2$ & $-1$ & $+19$ & $-5$ & $-5$  \\ \hline

$\Pi^{[2]}_{-1+1+1}$ & $+5$ & $-2$ & $+1$ & $+1$ & $-1$ & $+2$ & $-1$ & $+19$ & $-5$ & $-5$  \\ \hline

$\Pi^{[2]}_{-1+1-1}$ & $+5$ & $-1$ & $+2$ & $-1$ & $-1$ & $-1$ & $+2$ & $-5$ & $+19$ & $-5$  \\ \hline

$\Pi^{[2]}_{-1-1+1}$ & $+5$ & $-1$ & $-1$ & $+2$ & $+2$ & $-1$ & $-1$ & $-5$ & $-5$ & $+19$  \\ \hline

$\Pi^{[2]}_{-1-1-1}$ & $+1$ & $0$ & $0$ & $0$ & $0$ & $0$ & $0$ & $-9$ & $-9$ & $-9$  \\ \hline
\end{tabular}
\caption{The coefficients of the effects are listed. It is immediate from this table to verify that $\sum_{\bf a}\Pi^{[2]}_{\bf a}=\mathbf{I}_2\otimes\mathbf{I}_2$.}\label{tab1}
\end{table*}

On the anti-parallel configuration we also have the optimal sharpness $\lambda^{[1|1]}_{\mathtt{SyTri}}=\frac{2\sqrt{2}}{3}$ with the optimal POVM $\mathcal{G}^{[1|1]}_{\mathtt{SyTri}}\equiv\left\{\Pi^{[1|1]}_{\bf a}~|~a_i=\pm1\right\}$, where the effects $\Pi^{[1|1]}_{\mathbf{a}}$'s are obtained from the optimal parallel effects by the rule
\begin{align}
\Pi^{[1|1]}_{\mathbf{a}}:=\left(\mathbf{I}_2\otimes U_{\hat{t}}\right)\Pi^{[2]}_{\mathbf{a}}\left(\mathbf{I}_2\otimes U_{\hat{t}}\right),\nonumber
\end{align}
where $U_{\hat{t}}$ is the unitary operator employing $\pi$ rotation of the Bloch sphere about the axis $\hat{t}=\tfrac{1}{\sqrt{3}}(1,1,1)$.

\subsection{Analysis of $\mathtt{SyTet}$}

\noindent The set $\mathtt{SyTet}$ contains four spin observables $\{\sigma_{\hat{n}_r}~|~r=0,1,2,3\}$, where
\begin{align*}
\hat{n}_0=\tfrac{1}{\sqrt{3}}\big(1,1,1\big),~~~\hat{n}_1=\tfrac{1}{\sqrt{3}}\big(1,-1,-1\big),~~~\hat{n}_2=\tfrac{1}{\sqrt{3}}\big(-1,1,-1\big),~~~\hat{n}_3=\tfrac{1}{\sqrt{3}}\big(-1,-1,1\big).
\end{align*}
Here we use the following short hand notations 
\begin{align*}
\mathbb{S}(A,B,C)&:=A\otimes B\otimes C + B\otimes C\otimes A + C\otimes A\otimes B \nonumber\\
&\hspace{.2cm}+A\otimes C\otimes B + C\otimes B\otimes A + B\otimes A\otimes C,\\
\mathbb{S}(A,B,B)&:=A\otimes B\otimes B + B\otimes A\otimes B + B\otimes B\otimes A.
\end{align*}    
The optimal joint measurability threshold $\lambda^{[3]}_{\mathrm{SyTet}}$ is obtained through SDP. Expressing $\Pi^{[3]}_{\bf a}$'s in Hilbert–Schmidt operator basis as
{\footnotesize
\begin{align}
\Pi^{[3]}_{\bf a}=&\beta_{\bf a}^{X\mathbf{I}_2\mathbf{I}_2}~\mathbb{S}(X,\mathbf{I}_2,\mathbf{I}_2)+\beta_{\bf a}^{Y\mathbf{I}_2\mathbf{I}_2}~\mathbb{S}(Y,\mathbf{I}_2,\mathbf{I}_2)+\beta_{\bf a}^{Z\mathbf{I}_2\mathbf{I}_2}~\mathbb{S}(Z,\mathbf{I}_2,\mathbf{I}_2)+\beta_{\bf a}^{XX\mathbf{I}_2}~\mathbb{S}(\mathbf{I}_2,X,X)+\beta_{\bf a}^{YY\mathbf{I}_2}~\mathbb{S}(\mathbf{I}_2,Y,Y)+\beta_{\bf a}^{ZZ\mathbf{I}_2}~\mathbb{S}(\mathbf{I}_2,Z,Z)\nonumber\\
&+\beta_{\bf a}^{XY\mathbf{I}_2}~\mathbb{S}(X,Y,\mathbf{I}_2)+\beta_{\bf a}^{YZ\mathbf{I}_2}~\mathbb{S}(Y,Z,\mathbf{I}_2)+\beta_{\bf a}^{ZX\mathbf{I}_2}~\mathbb{S}(Z,X,\mathbf{I}_2)+\beta_{\bf a}^{XYY}~\mathbb{S}(X,Y,Y)+\beta_{\bf a}^{YXX}~\mathbb{S}(Y,X,X)+\beta_{\bf a}^{YZZ}~\mathbb{S}(Y,Z,Z)\nonumber\\
&+\beta_{\bf a}^{ZYY}~\mathbb{S}(Z,Y,Y)+\beta_{\bf a}^{ZXX}~\mathbb{S}(Z,X,X)+\beta_{\bf a}^{XZZ}~\mathbb{S}(X,Z,Z)+\beta_{\bf a}^{XXX}~X^{\otimes3}+\beta_{\bf a}^{YYY}~Y^{\otimes3}+\beta_{\bf a}^{ZZZ}~Z^{\otimes3}+\beta_{\bf a}^{\mathbf{I}_2\mathbf{I}_2\mathbf{I}_2}~\mathbf{I}_2^{\otimes3},\nonumber
\end{align}}
\nonumber the optimal POVM $\mathcal{G}^{[3]}_{\mathtt{SyTet}}\equiv\{\Pi^{[3]}_{\bf a}=\Pi^{[3]}_{a_0a_1a_2a_3}~|~a_i=\pm1\}$ is completely specified by coefficients listed in Table \ref{tab2}. A tedious but straightforward calculation shows that the POVM $\mathcal{G}^{[3]}_{\mathtt{SyTet}}$ satisfies
\begin{align}
\sum_{\mathbf{a} \setminus a_r} \Pi^{[3]}_{\mathbf{a}}=\frac{1}{6}\left[3~\mathbf{I}_2^{\otimes3} +\left(\frac{3}{\sqrt{3}+\sqrt{2}}\right)\mathbb{S}(\sigma_{\hat{n}_r},\mathbf{I}_2,\mathbf{I}_2)\right],
\end{align}
\noindent for $r\in\{0,1,2,3\}$, thereby ensuring joint measurability of the observables in $\mathtt{SyTet}$ on three copy ensemble $[3]$ up to the sharpness threshold $\lambda^{[3]}_{\mathrm{SyTet}}=\frac{3}{\sqrt{3}+\sqrt{2}}\approx0.95351$~.

\begin{table*}[t!]
\centering
\begin{tabular}{c||c|c|c|c|c|c|c|c|c|c|c|c|c|c|c|c|c|c|c|}
 &{\rotatebox{90}{$\beta_{\bf a}^{\mathbf{I}_2\mathbf{I}_2\mathbf{I}_2}$}} & {\rotatebox{90}{$\beta_{\bf a}^{X\mathbf{I}_2\mathbf{I}_2}$}} & {\rotatebox{90}{$\beta_{\bf a}^{Y\mathbf{I}_2\mathbf{I}_2}$}} & {\rotatebox{90}{$\beta_{\bf a}^{Z\mathbf{I}_2\mathbf{I}_2}$}} & {\rotatebox{90}{$\beta_{\bf a}^{XX\mathbf{I}_2}$}} & {\rotatebox{90}{$\beta_{\bf a}^{YY\mathbf{I}_2}$}} & {\rotatebox{90}{$\beta_{\bf a}^{ZZ\mathbf{I}_2}$}} & {\rotatebox{90}{$\beta_{\bf a}^{XY\mathbf{I}_2}$}} & {\rotatebox{90}{$\beta_{\bf a}^{YZ\mathbf{I}_2}$}} & {\rotatebox{90}{$\beta_{\bf a}^{ZX\mathbf{I}_2}$}} & {\rotatebox{90}{$\beta_{\bf a}^{XYY}$}} & {\rotatebox{90}{$\beta_{\bf a}^{YXX}$}} & {\rotatebox{90}{$\beta_{\bf a}^{YZZ}$}} & {\rotatebox{90}{$\beta_{\bf a}^{ZYY}$}} & {\rotatebox{90}{$\beta_{\bf a}^{ZXX}$}} & {\rotatebox{90}{$\beta_{\bf a}^{XZZ}$}} & {\rotatebox{90}{$\beta_{\bf a}^{XXX}$}} & {\rotatebox{90}{$\beta_{\bf a}^{Y~YY}$}}& {\rotatebox{90}{$\beta_{\bf a}^{ZZZ}$}} \\ \hline\hline

{$\Pi^{[3]}_{+1+1+1+1}$} & {$3a$} & {$0$} & {$0$} & {$0$} & {$-a$} & {$-a$} & {$-a$} & {$0$} & {$0$} & {$0$} & {$0$} & {$0$} & {$0$} & {$0$} & {$0$} & {$0$} & {$0$} & {$0$} & {$0$}  \\ \hline

{$\Pi^{[3]}_{-1-1-1-1}$} & {$3a$} & {$0$} & {$0$} & {$0$} & {$-a$} & {$-a$} & {$-a$} & {$0$} & {$0$} & {$0$} & {$0$} & {$0$} & {$0$} & {$0$} & {$0$} & {$0$} & {$0$} & {$0$} & {$0$}  \\ \hline

{$\Pi^{[3]}_{+1-1-1-1}$} & {$b_1$} & {$b_2$} & {$b_2$} & {$b_2$} & {$-b_3$} & {$-b_3$} & {$-b_3$} & {$b_4$} & {$b_4$} & {$b_4$} & {$b_5$} & {$b_5$} & {$b_5$} & {$b_5$} & {$b_5$} & {$b_5$} & {$-b_6$} & {$-b_6$} & {$-b_6$} \\ \hline

{$\Pi^{[3]}_{-1+1+1+1}$} & {$b_1$} & {$-b_2$} & {$-b_2$} & {$-b_2$} & {$-b_3$} & {$-b_3$} & {$-b_3$} & {$b_4$} & {$b_4$} & {$b_4$} & {$-b_5$} & {$-b_5$} & {$-b_5$} & {$-b_5$} & {$-b_5$} & {$-b_5$} & {$b_6$} & {$b_6$} & {$b_6$} \\ \hline

{$\Pi^{[3]}_{-1+1-1-1}$} & {$b_1$} & {$b_2$} & {$-b_2$} & {$-b_2$} & {$-b_3$} & {$-b_3$} & {$-b_3$} & {$-b_4$} & {$b_4$} & {$-b_4$} & {$b_5$} & {$-b_5$} & {$-b_5$} & {$-b_5$} & {$-b_5$} & {$b_5$} & {$-b_6$} & {$b_6$} & {$b_6$} \\ \hline

{$\Pi^{[3]}_{+1-1+1+1}$} & {$b_1$} & {$-b_2$} & {$b_2$} & {$b_2$} & {$-b_3$} & {$-b_3$} & {$-b_3$} & {$-b_4$} & {$b_4$} & {$-b_4$} & {$-b_5$} & {$b_5$} & {$b_5$} & {$b_5$} & {$b_5$} & {$-b_5$} & {$b_6$} & {$-b_6$} & {$-b_6$} \\ \hline

{$\Pi^{[3]}_{-1-1+1-1}$} & {$b_1$} & {$-b_2$} & {$b_2$} & {$-b_2$} & {$-b_3$} & {$-b_3$} & {$-b_3$} & {$-b_4$} & {$-b_4$} & {$b_4$} & {$-b_5$} & {$b_5$} & {$b_5$} & {$-b_5$} & {$-b_5$} & {$-b_5$} & {$b_6$} & {$-b_6$} & {$b_6$} \\ \hline

{$\Pi^{[3]}_{+1+1-1+1}$} & {$b_1$} & {$b_2$} & {$-b_2$} & {$b_2$} & {$-b_3$} & {$-b_3$} & {$-b_3$} & {$-b_4$} & {$-b_4$} & {$b_4$} & {$b_5$} & {$-b_5$} & {$-b_5$} & {$b_5$} & {$b_5$} & {$b_5$} & {$-b_6$} & {$b_6$} & {$-b_6$} \\ \hline

{$\Pi^{[3]}_{-1-1-1+1}$} & {$b_1$} & {$-b_2$} & {$-b_2$} & {$b_2$} & {$-b_3$} & {$-b_3$} & {$-b_3$} & {$b_4$} & {$-b_4$} & {$-b_4$} & {$-b_5$} & {$-b_5$} & {$-b_5$} & {$b_5$} & {$b_5$} & {$-b_5$} & {$b_6$} & {$b_6$} & {$-b_6$} \\ \hline

{$\Pi^{[3]}_{+1+1+1-1}$} & {$b_1$} & {$b_2$} & {$b_2$} & {$-b_2$} & {$-b_3$} & {$-b_3$} & {$-b_3$} & {$b_4$} & {$-b_4$} & {$-b_4$} & {$b_5$} & {$b_5$} & {$b_5$} & {$-b_5$} & {$-b_5$} & {$b_5$} & {$-b_6$} & {$-b_6$} & {$b_6$} \\ \hline

{$\Pi^{[3]}_{-1-1+1+1}$} & {$c_1$} & {$-c_2$} & {$0$} & {$0$} & {$c_3$} & {$-c_4$} & {$-c_4$} & {$0$} & {$0$} & {$0$} & {$c_5$} & {$0$} & {$0$} & {$0$} & {$0$} & {$c_5$} & {$-c_6$} & {$0$} & {$0$} \\ \hline

{$\Pi^{[3]}_{+1+1-1-1}$} & {$c_1$} & {$c_2$} & {$0$} & {$0$} & {$c_3$} & {$-c_4$} & {$-c_4$} & {$0$} & {$0$} & {$0$} & {$-c_5$} & {$0$} & {$0$} & {$0$} & {$0$} & {$-c_5$} & {$c_6$} & {$0$} & {$0$} \\ \hline

{$\Pi^{[3]}_{-1+1-1+1}$} & {$c_1$} & {$0$} & {$-c_2$} & {$0$} & {$-c_4$} & {$c_3$} & {$-c_4$} & {$0$} & {$0$} & {$0$} & {$0$} & {$c_5$} & {$c_5$} & {$0$} & {$0$} & {$0$} & {$0$} & {$-c_6$} & {$0$} \\ \hline

{$\Pi^{[3]}_{+1-1+1-1}$} & {$c_1$} & {$0$} & {$c_2$} & {$0$} & {$-c_4$} & {$c_3$} & {$-c_4$} & {$0$} & {$0$} & {$0$} & {$0$} & {$-c_5$} & {$-c_5$} & {$0$} & {$0$} & {$0$} & {$0$} & {$c_6$} & {$0$} \\ \hline

{$\Pi^{[3]}_{-1+1+1-1}$} & {$c_1$} & {$0$} & {$0$} & {$-c_2$} & {$-c_4$} & {$-c_4$} & {$c_3$} & {$0$} & {$0$} & {$0$} & {$0$} & {$0$} & {$0$} & {$c_5$} & {$c_5$} & {$0$} & {$0$} & {$0$} & {$-c_6$} \\ \hline

{$\Pi^{[3]}_{+1-1-1+1}$} & {$c_1$} & {$0$} & {$0$} & {$c_2$} & {$-c_4$} & {$-c_4$} & {$c_3$} & {$0$} & {$0$} & {$0$} & {$0$} & {$0$} & {$0$} & {$-c_5$} & {$-c_5$} & {$0$} & {$0$} & {$0$} & {$c_6$} \\ \hline
 
\end{tabular}
\caption{Coefficients of the operators $\Pi^{[3]}_{\bf a}$'s. The coefficient $\beta_{\bf a}^{XYZ}$ for all the $\Pi^{[3]}_{\bf a}$'s take value zero, and hence are not shown as a different column. Values of these coefficients are obtained through SDP and given by: $a=6k\big(2 \sqrt{2}-2 \sqrt{3}+1\big),$~$b_1=9k\big(10 \sqrt{2}-14 \sqrt{3}-4 \sqrt{6}+25\big),$~$
c_1=6k \big(-22 \sqrt{2}+30 \sqrt{3}+8 \sqrt{6}-35\big),$~$
b_2=4k \big(8 \sqrt{3}-4 \sqrt{3 \big(\sqrt{3}+2\big)}+3\big),$~$ 
c_2=8k \big(6 \sqrt{2}-8 \sqrt{3}-10 \sqrt{6}+33\big),$~$
b_3=3k \big(10 \sqrt{2}-2 \sqrt{3} \big(6 \sqrt{2}+7\big)+41\big),$~$
c_3=2k \big(22 \sqrt{2}-30 \sqrt{3}-40 \sqrt{6}+131\big),$~$
b_4=12k \big(\sqrt{6}-2\big),$~$
c_4=2k \big(-22 \sqrt{2}+30 \sqrt{3}+16 \sqrt{6}-59\big),$~$
b_5=4k \big(-8 \sqrt{3}-2 \sqrt{6 \big(7-4 \sqrt{3}\big)}+15\big),$~$
c_5=8k \big(-8 \sqrt{3}-2 \sqrt{6 \big(7-4 \sqrt{3}\big)}+15\big),$~$
b_6=12k \big(-6 \sqrt{2}+8 \sqrt{3}+2 \sqrt{6}-9\big),$~$
c_6=24k \big(-6 \sqrt{2}+8 \sqrt{3}+2 \sqrt{6}-9\big)$,~$k=1/576$~.}\label{tab2}
\end{table*}

\end{document}